\newtheorem{theorem}{Theorem}
\newtheorem{prop}{Proposition}
\newtheorem{lemma}{Lemma}
\newtheorem{coro}[theorem]{Corollary}
\begin{document}

\title{Secure Symmetrical Multilevel Diversity Coding}

\author{Anantharaman~Balasubramanian, Hung D. Ly, Shuo Li,\\ Tie~Liu, and Scott~L.~Miller\thanks{This paper was presented in part at the 2009 Information Theory and Applications (ITA) Workshop, San Diego, CA, February~2009. This research was supported in part by the National Science Foundation under Grants CCF-08-45848 and CCF-09-16867. A.~Balasubramanian is with Qualcomm Inc., San Diego, CA 92121, USA (email: anantha.tamu@gmail.com). H. D. Ly, S. Li, T. Liu and S. L. Miller are with the Department of Electrical and Computer Engineering, Texas A\&M University, College Station, TX 77843, USA (email: \{hungly,lishuoxa\}@tamu.edu, \{tieliu,smiller\}@ece.tamu.edu).}
}

\date{\today}

\maketitle

\begin{abstract}
Symmetrical Multilevel Diversity Coding (SMDC) is a network compression problem introduced by Roche (1992) and Yeung (1995). In this setting, a simple separate coding strategy known as superposition coding was shown to be optimal in terms of achieving the minimum sum rate (Roche, Yeung, and Hau, 1997) and the entire admissible rate region (Yeung and Zhang, 1999) of the problem. This paper considers a natural generalization of SMDC to the secure communication setting with an additional eavesdropper. It is required that all sources need to be kept perfectly secret from the eavesdropper as long as the number of encoder outputs available at the eavesdropper is no more than a given threshold. First, the problem of encoding individual sources is studied. A precise characterization of the entire admissible rate region is established via a connection to the problem of secure coding over a three-layer wiretap network and utilizing some basic polyhedral structure of the admissible rate region. Building on this result, it is then shown that superposition coding remains optimal in terms of achieving the minimum sum rate for the general secure SMDC problem.
\end{abstract}

\section{Introduction}
Symmetrical Multilevel Diversity Coding (SMDC) is a source coding problem with $L$ independent discrete memoryless sources $(S_1,\ldots,S_L)$, where the \emph{importance} of the sources is assumed to decrease with the subscript $l$. The sources are to be encoded by a total of $L$ encoders, where the rate of the $l$th encoder output is $R_l$. The decoder can access a subset $U \subseteq \Omega_L:=\{1,\ldots,L\}$ of the encoder outputs. Which subset of the encoder outputs is available at the decoder is \emph{unknown} a priori at the encoders. However, no matter which subset $U$ actually realizes, the sources $(S_1,\ldots,S_m)$ need to be asymptotically perfectly reconstructed at the decoder whenever $|U|\geq m$. Note that the word ``symmetrical" here refers to the fact that the sources that need to be reconstructed at the decoder depend on the available subset of the encoder outputs only via its cardinality. The rate allocations at different encoders, however, can be different and are not necessarily symmetrical.

The problem of Multilevel Diversity Coding (MDC) was introduced by Roche \cite{Roc-Thesis92} and Yeung \cite{Yeu-IT95} in the early 1990s. In particular, \cite{Yeu-IT95} considered the simple coding strategy of separately encoding different sources at the encoders, subsequently referred to as \emph{superposition coding}. The aforementioned SMDC problem was first systematically studied in \cite{RYH-IT97}, where it was shown that superposition coding can achieve the minimum sum rate for the general SMDC problem (with an arbitrary total number of encoders $L$) and the entire admissible rate region with $L=3$ encoders. The problem regarding whether superposition coding can achieve the entire admissible rate region for the general SMDC problem, however, remained open. Finally, in a very elegant (albeit highly technical) paper \cite{YZ-IT99}, Yeung and Zhang resolved the open problem by positive through the so-called \emph{$\alpha$-resolution} method.

Recent years have seen a flurry of research on information-theoretic security. See \cite{LPS-M09} and \cite{LT-M10} for surveys of recent progress in this field. Motivated by this renewed interest, in this paper we consider the problem of \emph{Secure} Symmetrical Multilevel Diversity Coding (S-SMDC) in the presence of an additional eavesdropper. Specifically, a collection of $L-N$ independent discrete memoryless sources $(S_1,\ldots,S_{L-N})$ are to be encoded by a total of $L$ encoders, where the rate of the $l$th encoder output is $R_l$. A legitimate receiver and an eavesdropper can access a subset $U \subseteq \Omega_L$ and $A \subseteq \Omega_L$ of the encoder outputs, respectively. Which subsets of the encoder outputs are available at the legitimate receiver and the eavesdropper are \emph{unknown} a priori at the encoders. However, no matter which subsets $U$ and $A$ actually occur, the sources $(S_1,\ldots,S_k)$ need to be asymptotically perfectly reconstructed at the legitimate receiver whenever $|U|\geq N+k$, and the entire collection of the sources $(S_1,\ldots,S_{L-N})$ needs to be kept \emph{perfectly} secret from the eavesdropper as long as $|A| \leq N$. As before, the word ``symmetrical" here refers to the access structure at the legitimate receiver and the eavesdropper, but not to the rate allocations at different encoders. We envision that such a communication scenario is useful for designing distributed information storage systems \cite{Roc-Thesis92} where information retrieval needs to be both robust and secure. 

As mentioned previously, separate encoding of different sources (superposition coding) can achieve the entire admissible rate region for the general SMDC problem without any secrecy constraints \cite{YZ-IT99}. It is thus natural to ask whether the same separate encoding strategy would remain optimal for the general S-SMDC problems. For the classical SMDC problems without any secrecy constraints, the problem of efficient encoding of individual sources is essentially to transmit the source over an \emph{erasure} channel and is well understood based on the earlier work of Singleton \cite{Sin-IT64}. For the S-SMDC problems, however, the problem of efficient encoding of individual sources is closely related to the problem of secure coding over a \emph{Wiretap Network (WN)} \cite{CY-ITS}, which, in its most general setting, is a very challenging problem in information-theoretic security.

The rest of the paper is organized as follows. In Sec.~\ref{sec:SL}, we focus on the problem of encoding individual sources, i.e., Secure Symmetrical Single-level Diversity Coding (S-SSDC). By leveraging the results of \cite{CY-ITS} on secure coding over a three-layer WN and utilizing some basic polyhedral structure of the admissible rate region, we provide a precise characterization of the entire admissible rate region for the general S-SSDC problem. Building on this result, in Sec.~\ref{sec:ML} we show that superposition coding can achieve the minimum sum rate for the general S-SMDC problem. Finally, in Sec.~\ref{sec:Con} we conclude the paper with some remarks.

\section{Secure Symmetrical Single-Level Diversity Coding}\label{sec:SL}
\subsection{Problem Statement}\label{sec:SL-ps}
Let $\{S[t]\}_{t=1}^{\infty}$ be a discrete memoryless source with time index $t$ and let $S^n:=(S[1],\ldots,S[n])$. An $(L,N,m)$ S-SSDC problem consists of a set of $L$ encoders, a legitimate receiver who has access to a subset $U \subseteq \Omega_L$ of the encoder outputs, and an eavesdropper who has access to a subset $A \subseteq \Omega_L$ of the encoder outputs. Which subsets of the encoder outputs are available at the legitimate receiver and the eavesdropper are \emph{unknown} a priori at the encoders. However, no matter which subsets $U$ and $A$ actually occur, the legitimate receiver must be able to asymptotically perfectly reconstruct the source whenever $|U| \geq m$, and the source must be kept \emph{perfectly} secret from the eavesdropper as long as $|A| \leq N$. Obviously, reliable and secure communication of the source is possible only when $m > N$.

Formally, an $(n,(M_1,\ldots,M_L))$ code is defined by a collection of $L$ encoding functions
\begin{equation}
e_l: \mathcal{S}^n \times \mathcal{K} \rightarrow \{1,\ldots,M_l\}, \quad \forall l=1,\ldots,L
\end{equation}
and decoding functions
\begin{equation}
d_U: \prod_{l \in U}\{1,\ldots,M_l\} \rightarrow \mathcal{S}^n, \quad \forall U \subseteq \Omega_L \; \mbox{s.t.} \; |U| \geq m.
\end{equation}
Here, $\mathcal{K}$ denotes the key space accessible to all $L$ encoders. There are no limitations on the size of the key space $\mathcal{K}$. However, the secret key is only shared by the encoders, but \emph{not} with the legitimate receiver or the eavesdropper. A nonnegative rate tuple $(R_1,\ldots,R_L)$ is said to be \emph{admissible} if for every $\epsilon>0$, there exits, for sufficiently large block length $n$, an $(n,(M_1,\ldots,M_L))$ code such that:
\begin{itemize}
\item (Rate constraints)
\begin{equation}
\frac{1}{n}\log M_l \leq R_l +\epsilon, \quad \forall l =1,\ldots,L;\label{eq:S-Rate}
\end{equation}
\item (Asymptotically perfect reconstruction at the legitimate receiver)
\begin{equation}
\mathrm{Pr}\{d_U(X_U) \neq S^n\} \leq \epsilon, \quad \forall U \subseteq \Omega_L \; \mbox{s.t.} \; |U| \geq m
\label{eq:S-Rec}
\end{equation}
where $X_l:=e_l(S^n,K)$ is the output of the $l$th encoder, $K$ is the secret key shared by all $L$ encoders, and $X_U:=\{X_l:  l \in U\}$; and
\item (Perfect secrecy at the eavesdropper)
\begin{equation}
H(S^n|X_A) = H(S^n), \qquad \forall A\subseteq \Omega_L \; \mbox{s.t.} \; |A|\leq N
\label{eq:S-Per}
\end{equation}
i.e., observing the encoder outputs $X_A$ does not provide \emph{any} information regarding to the source sequence $S^n$. 
\end{itemize}

The \emph{admissible rate region} $\mathcal{R}$ is the collection of \emph{all} admissible rate tuples $(R_1,\ldots,R_L)$. The \emph{minimum sum rate} $R_{ms}$ is defined as
\begin{equation}
R_{ms} := \min_{(R_1,\ldots,R_L) \in \mathcal{R}}\sum_{l=1}^{L}R_l.
\end{equation}

\subsection{Main Results}\label{sec:SL-main}
The following lemma provides a simple outer bound on the admissible rate region of the general S-SSDC problem. Let $\mathcal{R}(L,k,H)$ be the collection of all nonnegative rate tuples $(R_1,\ldots,R_L)$ satisfying
\begin{equation}
\sum_{l \in D} R_l \geq H, \quad \forall D \in \Omega_L^{(k)}
\label{eq:S-SSDC}
\end{equation}
where $\Omega_L^{(k)}$ is the collection of all subsets of $\Omega_L$ of size $k$.

\begin{lemma}\label{lemma:S-SSDC}
For any $(L,N,m)$ S-SSDC problem, the admissible rate region
\begin{equation}
\mathcal{R} \subseteq \mathcal{R}(L,m-N,H(S)).
\end{equation}
\end{lemma}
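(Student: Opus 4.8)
The plan is to prove the bound by combining Fano's inequality with the perfect-secrecy constraint, in the spirit of a standard multiterminal converse. Let $(R_1,\ldots,R_L)\in\mathcal{R}$ and fix an arbitrary $D\in\Omega_L^{(m-N)}$; it suffices to show $\sum_{l\in D}R_l\ge H(S)$. The key observation is that $|D|+N=m$, and since the reconstruction requirement is only meaningful when $m\le L$, we may select a set $A\subseteq\Omega_L\setminus D$ with $|A|=N$ and put $U:=A\cup D$, so that $|U|=m$ and $D,A$ partition $U$.

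First I would invoke reconstruction: since $|U|=m$, \eqref{eq:S-Rec} together with Fano's inequality (and $|\mathcal{S}^n|=|\mathcal{S}|^n$) gives $H(S^n\mid X_U)\le n\delta_\epsilon$ with $\delta_\epsilon\to 0$ as $\epsilon\to 0$. Next, since the source is memoryless, $H(S^n)=nH(S)$, and by \eqref{eq:S-Per}, $H(S^n\mid X_A)=H(S^n)=nH(S)$. Using $X_U=(X_A,X_D)$, subadditivity of entropy, the cardinality bound $H(X_l)\le\log M_l$, and finally the rate constraints \eqref{eq:S-Rate}, I would chain
\begin{align*}
nH(S) = H(S^n\mid X_A)
&\le H(S^n\mid X_A) - H(S^n\mid X_U) + n\delta_\epsilon \\
&= I(S^n; X_D\mid X_A) + n\delta_\epsilon \\
&\le H(X_D) + n\delta_\epsilon \\
&\le \sum_{l\in D} H(X_l) + n\delta_\epsilon \\
&\le \sum_{l\in D} \log M_l + n\delta_\epsilon \\
&\le n\sum_{l\in D}(R_l+\epsilon) + n\delta_\epsilon.
\end{align*}
Dividing by $n$ and letting $\epsilon\to 0$ (so that $\delta_\epsilon\to 0$) gives $\sum_{l\in D}R_l\ge H(S)$; since $D\in\Omega_L^{(m-N)}$ was arbitrary, this establishes $\mathcal{R}\subseteq\mathcal{R}(L,m-N,H(S))$.

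This is a short argument and I do not anticipate a genuine obstacle; the only step that needs a moment's thought is the set-theoretic bookkeeping — choosing the auxiliary eavesdropper set $A$ disjoint from $D$ with $|A|=N$ and verifying $|A\cup D|=m$ — so that $X_A$ carries no information about $S^n$ by \eqref{eq:S-Per} while $X_D$ alone must account for the remaining uncertainty within $X_U$. Everything else is a one-line application of Fano's inequality, subadditivity of entropy, and the rate constraints.
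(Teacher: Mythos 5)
Your proposal is correct and follows essentially the same route as the paper's proof in Appendix~\ref{app:A}: choose $A\in\Omega_{L\setminus D}^{(N)}$ so that $|A\cup D|=m$, apply Fano's inequality to $X_{A\cup D}$ and the perfect-secrecy condition to $X_A$, and chain entropy bounds down to $\sum_{l\in D}R_l$. The only cosmetic difference is that you phrase the middle step via $I(S^n;X_D\mid X_A)$ while the paper expands $H(X_D,S^n\mid X_A)$ two ways, and you should note that the Fano term $\delta(n,\epsilon)=1/n+\epsilon\log|\mathcal{S}|$ requires $n\to\infty$ as well as $\epsilon\to 0$ to vanish.
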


Lemma~\ref{lemma:S-SSDC} can be proved using standard information-theoretic techniques. For completeness, a proof is included in Appendix~\ref{app:A}. The above outer bound is known to be tight in the following two special cases:
\begin{itemize}
\item[1)] When $N=0$, the $(L,N,m)$ S-SSDC problem reduces to the classical $(L,m)$ SSDC problem without any secrecy constraints, for which the admissible rate region is known \cite{Sin-IT64} to be $\mathcal{R}(L,m,H(S))$.
\item[2)] With $N>0$ but $m=N+1$, a collection $D$ of the encoder outputs will either lead to an asymptotically perfect reconstruction of the source (whenever $|D|\geq N+1$), or provide zero information on the source (whenever $|D| \leq N$). In this case, the $(L,N,m)$ S-SSDC problem reduces to the classical $(L,N)$ \emph{threshold secret sharing} problem, for which the admissible rate region is known \cite{Sha-CACM79,Bla-NCC79} to be $\mathcal{R}(L,1,H(S))$.
\end{itemize}

The main result of this section is that the outer bound $\mathcal{R}(L,m-N,H(S))$ is in fact the admissible rate region for the \emph{general} S-SSDC problem, as summarized in the following theorem.

\begin{theorem}\label{thm:S-SSDC}
For any $(L,N,m)$ S-SSDC problem, the admissible rate region
\begin{equation}
\mathcal{R}=\mathcal{R}(L,m-N,H(S)).
\end{equation}
\end{theorem}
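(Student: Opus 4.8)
The plan is to establish the reverse inclusion to Lemma~\ref{lemma:S-SSDC}, i.e.\ to show $\mathcal{R}(L,m-N,H(S))\subseteq\mathcal{R}$; together with Lemma~\ref{lemma:S-SSDC} this yields the theorem. Two elementary properties of $\mathcal{R}$ reduce the task to a finite set of corner tuples. First, $\mathcal{R}$ is \emph{up-closed}: if $(R_1,\ldots,R_L)\in\mathcal{R}$ and $R_l'\ge R_l$ for every $l$, then $(R_1',\ldots,R_L')\in\mathcal{R}$, since any surplus rate may be padded with encoder-local randomness and ignored by the decoder. Second, $\mathcal{R}$ is \emph{convex}: given codes for two rate tuples, one time-shares them over a block of length $n$ (using $\lambda n$ source symbols with one code and $(1-\lambda)n$ with the other, and \emph{independent} keys for the two parts), which meets \eqref{eq:S-Rate}--\eqref{eq:S-Per} for the convex combination; note that \eqref{eq:S-Per} is preserved precisely because the keys of the two parts are independent. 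Hence it suffices to produce a finite set $V$ of admissible tuples with $\mathcal{R}(L,m-N,H(S))=\operatorname{conv}(V)+\mathbb{R}^L_{\ge0}$.

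Write $k:=m-N\ge1$ and $H:=H(S)$. For each integer $j$ with $1\le j\le k$ and each $T\subseteq\Omega_L$ with $|T|=L-k+j$, let $\mathbf r^{(j,T)}$ be the rate tuple with $l$th coordinate $H/j$ for $l\in T$ and $0$ otherwise, and let $V$ collect all of these. The ``basic polyhedral structure'' step is to show
\begin{itemize}
\item[(i)] $\mathbf r^{(j,T)}\in\mathcal{R}(L,k,H)$ for every admissible pair $(j,T)$, and
\item[(ii)] $\mathcal{R}(L,k,H)=\operatorname{conv}(V)+\mathbb{R}^L_{\ge0}$.
\end{itemize}
Part (i) is immediate: for any $D\in\Omega_L^{(k)}$ one has $|D\cap T|\ge|D|+|T|-L=j$, so $\sum_{l\in D}\mathbf r^{(j,T)}_l\ge j\cdot(H/j)=H$. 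For (ii), ``$\supseteq$'' follows from (i) together with the fact that the recession cone of $\mathcal{R}(L,k,H)$ is $\mathbb{R}^L_{\ge0}$; ``$\subseteq$'' is obtained by checking that every extreme point of the polyhedron $\mathcal{R}(L,k,H)$ belongs to $V$ --- at an extreme point one argues, via a short exchange argument on the $k$-subsets that are tight, that all strictly positive coordinates share a common value $H/j$ and that the support $T$ must have exactly $L-k+j$ elements, which is precisely the form of $\mathbf r^{(j,T)}$.

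It remains to show each $\mathbf r^{(j,T)}$ is admissible, which is where secure network coding enters. Fix $(j,T)$ with $|T|=L-k+j$ and use a code in which only the encoders indexed by $T$ are active, each emitting $\approx nH/j$ bits. For every legitimate set $U$ with $|U|\ge m$ we have $|U\cap T|\ge|U|+|T|-L\ge m+(L-k+j)-L=N+j$, while for every eavesdropper set $A$ with $|A|\le N$ we have $|A\cap T|\le N$. Thus the induced problem is exactly a symmetric three-layer wiretap network: a source of rate $H$, a middle layer of $|T|$ nodes each fed from the source by an edge of capacity $H/j$, a sink for each qualifying $U$ joined to the middle nodes in $U\cap T$, and a wiretapper free to observe any $N$ of the source-to-middle edges. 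The min-cut from the source to every sink is at least $(N+j)(H/j)$ and the wiretapper sees at most $N(H/j)$ bits, and $(N+j)(H/j)-N(H/j)=H$; hence by the three-layer wiretap-network result of~\cite{CY-ITS} the source can be delivered to all sinks at rate $H$ with perfect secrecy. Concretely one first compresses $S^n$ near-losslessly into $\approx nH$ bits, appends $\approx nNH/j$ bits of independent key, treats the result as the information symbols of a length-$|T|$ MDS code over a suitable finite field, and lets the $l$th active encoder output the $l$th code symbol: MDS decodability from any $N+j$ symbols handles \eqref{eq:S-Rec}, and a nested/Vandermonde choice of the code makes any $N$ symbols independent of the source, giving \eqref{eq:S-Per}. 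This proves $\mathbf r^{(j,T)}\in\mathcal{R}$ and completes the argument.

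The main obstacle is step (ii): one has to guess the generating set $V$ correctly and verify that these ``MDS points'' are exactly the extreme points of $\mathcal{R}(L,k,H)$, so that an arbitrary --- possibly highly asymmetric --- target rate tuple decomposes, without loss, into a convex combination of \emph{symmetric} three-layer wiretap subproblems. Granting this decomposition, the achievability of each subproblem is a clean invocation of~\cite{CY-ITS}, and up-closedness plus convexity of $\mathcal{R}$ finish the proof.
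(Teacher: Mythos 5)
Your proof is correct, and its core achievability engine is the same as the paper's: reduce the region $\mathcal{R}(L,k,H)$ to its extreme points via convexity, up-closedness, and the fact that the recession cone is the nonnegative orthant, and then realize each extreme point as the symmetric operating point of a three-layer wiretap network whose min-cut gap $(N+j)\tfrac{H}{j}-N\tfrac{H}{j}=H$ lets you invoke \cite{CY-ITS} (or, as you note, an explicit ramp-secret-sharing/MDS construction). Where you genuinely diverge is in how the extreme points are handled. The paper never enumerates them: it proves only that the all-positive vertex is the symmetric tuple and that each $R_l=0$ slice of $\mathcal{R}(L,k,H)$ is a copy of $\mathcal{R}(L-1,k-1,H)$ (Proposition~\ref{lemma:poly}), then inducts on $L$, observing that appending a rate-zero dummy encoder turns an admissible $(L-1,N,m-1)$ code into an admissible $(L,N,m)$ code; the wiretap-network result is invoked only once per level of the induction, for the fully symmetric point. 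You instead guess the complete vertex set $V=\{\mathbf r^{(j,T)}\}$ up front and treat each vertex directly as the symmetric point of the sub-network supported on $T$, which makes the decomposition non-inductive and more explicit. The trade-off is that your step (ii) --- that every extreme point of $\mathcal{R}(L,k,H)$ has the form $\mathbf r^{(j,T)}$ with $|T|=L-k+j$ --- is only sketched via an exchange argument; it is true (and in fact follows from Proposition~\ref{lemma:poly} by the paper's own induction on slices), but it is exactly the combinatorial work the paper's inductive route is designed to avoid writing out. Two smaller points in your favor: you explicitly verify convexity of $\mathcal{R}$ by time-sharing with independent keys (correctly noting independence is what preserves the perfect-secrecy condition \eqref{eq:S-Per}) and up-closedness by padding, both of which the paper uses silently when it asserts that admissibility of the corner points suffices.
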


A proof of the theorem is provided in Sec.~\ref{sec:SL-pf}. To show that \emph{every} rate tuple in $\mathcal{R}(L,m-N,H(S))$ is admissible, our proof proceeds in the following two steps. First, we show that for any $(L,N,m)$ S-SSDC problem, the symmetrical rate tuple $(H(S)/(m-N),\ldots,H(S)/(m-N))$ is admissible. In our proof, this is accomplished by relating the S-SSDC problem to the problem of secure coding over a \emph{three-layer} WN and using the result of \cite[Th.~3]{CY-ITS} on an achievable secrecy rate for the generic WN. Building on the previous result, next we show that every rate tuple in $\mathcal{R}(L,m-N,H(S))$ is admissible via an induction argument (inducting on the total number of encoders $L$) and the following polyhedral structure of $\mathcal{R}(L,k,H)$.
\begin{prop}\label{lemma:poly}
$\mathcal{R}(L,k,H)$ is a pointed polyhedron in $\mathbb{R}^L$ with the following structural properties:
\begin{enumerate}
\item[1)] The characteristic cone of $\mathcal{R}(L,k,H)$ is given by 
$\{(R_1,\ldots,R_L): R_l \geq 0, \; \forall l=1,\ldots,L\}$.
\item[2)] Among all corner points (vertices) of $\mathcal{R}(L,k,H)$,
$(H/k,\ldots,H/k)$ is the \emph{only} one with all \emph{strictly} positive entries (if there exists any).
\item[3)] For any $l=1,\ldots,L$, the $R_l=0$ slice of $\mathcal{R}(L,k,H)$ is \emph{isomorphic} to $\mathcal{R}(L-1,k-1,H)$. In particular, the $R_L=0$ slice of $\mathcal{R}(L,k,H)$ is \emph{identical} to $\mathcal{R}(L-1,k-1,H)$, i.e.,
\begin{equation}
\{(R_1,\ldots,R_{L-1}): (R_1,\ldots,R_{L-1},0) \in \mathcal{R}(L,k,H)\} = \mathcal{R}(L-1,k-1,H).
\end{equation}
\end{enumerate}
\end{prop}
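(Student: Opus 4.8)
\medskip\noindent\textbf{Proof plan.}\ The plan is to dispatch the polyhedral preliminaries quickly and then concentrate on Property~2. First I would note that $\mathcal{R}(L,k,H)$ is a polyhedron by inspection --- it is the intersection of the finitely many closed half-spaces $R_l\ge 0$ ($l=1,\dots,L$) and $\sum_{l\in D}R_l\ge H$ ($D\in\Omega_L^{(k)}$) --- and that it is nonempty, since the all-equal tuple $(H,\dots,H)$ lies in it when $H\ge 0$. Property~1 and pointedness then come together: the characteristic cone of a nonempty polyhedron $\{R:AR\ge b\}$ is $\{R:AR\ge 0\}$, which here amounts to $R_l\ge 0$ for all $l$ together with $\sum_{l\in D}R_l\ge 0$ for all $D$; the latter inequalities are implied by the former, so the characteristic cone is exactly the nonnegative orthant. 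That cone contains no line, hence $\mathcal{R}(L,k,H)$ contains no line, hence it is pointed.

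For Property~3 I would set $R_L=0$ and sort the constraints $\sum_{l\in D}R_l\ge H$ by whether $L\in D$: those with $L\notin D$ become size-$k$ constraints on $\Omega_{L-1}$, and those with $L\in D$ become size-$(k-1)$ constraints on $\Omega_{L-1}$ (the $R_L$ term disappears). The one observation to make is that, in the presence of nonnegativity, a size-$k$ constraint on $\Omega_{L-1}$ is implied by any size-$(k-1)$ constraint obtained by deleting an index from it; so the size-$k$ constraints are redundant, and the $R_L=0$ slice is cut out by precisely the defining system of $\mathcal{R}(L-1,k-1,H)$. For a general coordinate $l$ I would invoke the symmetry of $\mathcal{R}(L,k,H)$ under permutations of coordinates: the transposition of $l$ and $L$ carries the $R_l=0$ slice onto the $R_L=0$ slice, so the former is isomorphic to $\mathcal{R}(L-1,k-1,H)$ via this relabeling.

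The core is Property~2. Suppose $v=(v_1,\dots,v_L)$ is a vertex of $\mathcal{R}(L,k,H)$ with every $v_l>0$. Since none of the constraints $R_l\ge 0$ is then active at $v$, the active constraints are exactly the equalities $\sum_{l\in D}R_l=H$ over the family $\mathcal D:=\{D\in\Omega_L^{(k)}:\sum_{l\in D}v_l=H\}$; and because $v$ is a vertex, this active system has rank $L$, so it has $v$ as its \emph{unique} solution. The punchline is that the symmetric tuple $(H/k,\dots,H/k)$ solves the \emph{same} system, because $\sum_{l\in D}(H/k)=k\cdot(H/k)=H$ for \emph{every} size-$k$ set $D$, in particular for every $D\in\mathcal D$; by uniqueness, $v=(H/k,\dots,H/k)$. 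For completeness I would also check the converse, that $(H/k,\dots,H/k)$ genuinely is a vertex when $1\le k\le L-1$ (and $H\ge 0$): the indicator vectors of all $k$-subsets of $\Omega_L$ span $\mathbb{R}^L$ --- their pairwise differences produce every difference $e_i-e_j$ of standard basis vectors, and any single one of them has nonzero coordinate sum --- and all those constraints are active there; whereas for $k=L$ no vertex has all entries strictly positive, so the statement is vacuous in that case.

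The only step I expect to require a genuine idea is that punchline in Property~2 --- noticing that the symmetric point automatically satisfies \emph{every} constraint tight at an arbitrary all-positive vertex, which lets the full-rank vertex condition collapse the candidate to a single point; everything else is standard polyhedral bookkeeping, the one mild nuisance being to keep the degenerate parameter cases ($k=1$, $k=L$, or $H=0$) consistent with the statement, which the arguments above already accommodate. Structurally, Properties~1--3 are exactly what the induction on $L$ in the proof of Theorem~\ref{thm:S-SSDC} consumes: Property~3 supplies the reduction $(L,k)\mapsto(L-1,k-1)$ on each coordinate hyperplane, while Properties~1 and~2 guarantee that any vertex of $\mathcal{R}(L,k,H)$ either lies on such a hyperplane or is the symmetric point, whose admissibility is established separately through the three-layer wiretap-network argument.
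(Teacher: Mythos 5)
Your proposal is correct and follows essentially the same route as the paper: Property~1 from the standard description of the characteristic cone, Property~3 by splitting the constraints according to whether they contain the deleted index and discarding the dominated size-$k$ ones, and Property~2 from the observation that $(H/k,\ldots,H/k)$ satisfies every constraint $\sum_{l\in D}R_l=H$ and hence every constraint active at an all-positive vertex. The only difference is one of completeness: the paper justifies Property~2 with that single observation, whereas you spell out the missing step (a vertex with strictly positive entries has a rank-$L$ active system drawn solely from the sum constraints, whose unique solution must then be the symmetric point), which is a worthwhile elaboration rather than a new approach.
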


\begin{proof}
Property 1 follows directly from the definition of characteristic cone \cite[Lec.~2]{Che-L10}. Property 2 is due to the fact that 
\begin{equation}
(R_1,\ldots,R_L) = \left(H/k,\ldots,H/k\right)
\end{equation}
is a solution to the equations
\begin{equation}
\sum_{l \in D} R_l = H, \quad \forall D \in \Omega_L^{(k)}.
\end{equation}

To see property 3, note that the $R_l=0$ slice of $\mathcal{R}(L,k,H)$ is given by all nonnegative rate tuples $(R_1,\ldots,R_{l-1},R_{l+1},\ldots,R_{L})$ satisfying
\begin{equation}
\sum_{d \in D}{R_d} \geq H, \quad \forall D \in \Omega_{L\setminus \{l\}}^{(k-1)} \cup \Omega_{L\setminus \{l\}}^{(k)}
\end{equation}
where $\Omega_{L\setminus \{l\}}^{(k)}$ denotes all subsets of $\Omega_L \setminus \{l\}$ of size $k$. Since every inequality with $D \in \Omega_{L\setminus \{l\}}^{(k)}$ is dominated by every inequality with $D' \in \Omega_{L\setminus \{l\}}^{(k-1)}$ and such that $D' \subseteq D$, we have the desired property. 
\end{proof}

\begin{figure}[t]
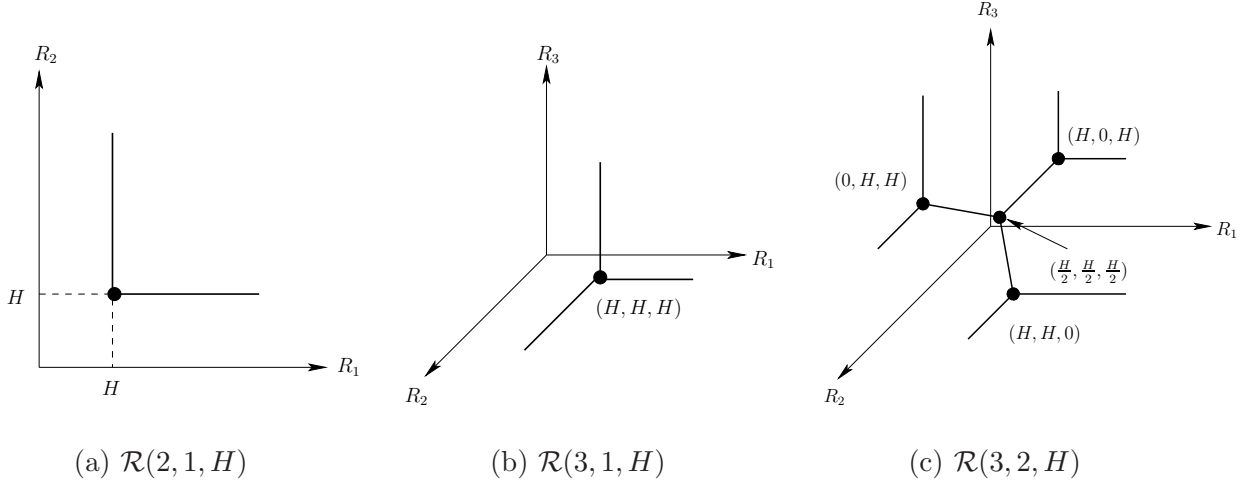

  \begin{minipage}[b]{0.32\linewidth}
  \centerline{\scalebox{0.65}{\input{l2n1m2.pstex_t}}}
  \vspace{0.3cm}
  \centerline{\mbox{\small (a) $\mathcal{R}(2,1,H)$}}
  \vspace{0.8cm}
  \end{minipage}
  \begin{minipage}[b]{0.32\linewidth}
  \centerline{\scalebox{0.65}{\input{l3n1m2.pstex_t}}}
  \vspace{0.3cm}
  \centerline{\mbox{\small (b) $\mathcal{R}(3,1,H)$}}
  \vspace{0.8cm}
  \end{minipage}
  \begin{minipage}[b]{0.32\linewidth}
  \centerline{\scalebox{0.6}{\input{l3n1m3.pstex_t}}}
  \vspace{0.3cm}
  \centerline{\mbox{\small (c) $\mathcal{R}(3,2,H)$}}
  \vspace{0.8cm}
  \end{minipage}
  \caption{Illustration of the rate region $\mathcal{R}(L,k,H)$ for $L=2$ and 3. The $R_l=0$ slices of $\mathcal{R}(3,1,H)$ are empty sets, and the $R_l=0$ slices of $\mathcal{R}(3,2,H)$ are isomorphic to $R(2,1,H)$.}
  \label{fig:poly}
\end{figure}

Fig.~\ref{fig:poly} illustrates the above polyhedral structure of $\mathcal{R}(L,k,H)$ for $L=2$ and $3$. The following corollary summarizes the minimum sum rate for the general S-SSDC problem.
\begin{coro}\label{cor:S-SSDC}
For any $(L,N,m)$ S-SSDC problem, the minimum sum rate
\begin{equation}
R_{ms} = \frac{L}{m-N}H(S).
\end{equation}
\end{coro}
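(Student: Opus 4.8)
The plan is to derive the corollary directly from Theorem~\ref{thm:S-SSDC}, which identifies the admissible rate region $\mathcal{R}$ with $\mathcal{R}(L,m-N,H(S))$, so that computing $R_{ms}$ amounts to minimizing the linear functional $\sum_{l=1}^L R_l$ over the explicit polyhedron $\mathcal{R}(L,k,H)$ with $k := m-N$ and $H := H(S)$. I would establish the two matching bounds separately.

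For the upper bound, I would simply exhibit the symmetrical rate tuple $(H/k,\ldots,H/k)$. By Proposition~\ref{lemma:poly} (property~2) this tuple lies in $\mathcal{R}(L,k,H)$ — indeed it satisfies $\sum_{l\in D}R_l = H$ for every $D\in\Omega_L^{(k)}$ — and hence, via Theorem~\ref{thm:S-SSDC}, it is admissible. Its sum rate is $L H/k$, so $R_{ms}\leq \frac{L}{m-N}H(S)$.

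For the matching lower bound, I would use the standard symmetrization/counting argument over the defining inequalities of $\mathcal{R}(L,k,H)$. Given any $(R_1,\ldots,R_L)\in\mathcal{R}(L,k,H)$, sum the inequality $\sum_{l\in D}R_l\geq H$ over all $D\in\Omega_L^{(k)}$. There are $\binom{L}{k}$ such subsets, and each index $l$ belongs to exactly $\binom{L-1}{k-1}$ of them, so the left-hand side equals $\binom{L-1}{k-1}\sum_{l=1}^L R_l$ while the right-hand side is $\binom{L}{k}H$. Dividing and using $\binom{L}{k}/\binom{L-1}{k-1}=L/k$ yields $\sum_{l=1}^L R_l\geq \frac{L}{k}H = \frac{L}{m-N}H(S)$. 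Combining the two bounds gives $R_{ms}=\frac{L}{m-N}H(S)$.

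There is no real obstacle here: once Theorem~\ref{thm:S-SSDC} is in hand, the corollary is a one-line optimization whose only ingredients are the feasibility of the symmetric point (Proposition~\ref{lemma:poly}) and an elementary double-counting identity for binomial coefficients. The only thing to be careful about is that $k=m-N\geq 1$ (guaranteed by the standing assumption $m>N$), so that the symmetric point and the ratio $L/k$ are well defined.
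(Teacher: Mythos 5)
Your proposal is correct and follows essentially the same argument as the paper: both establish the lower bound by summing the defining inequalities over all $D\in\Omega_L^{(k)}$ and double-counting (each index appears in $\binom{L-1}{k-1}$ subsets), match it with the symmetric point $(H/k,\ldots,H/k)$, and then invoke Theorem~\ref{thm:S-SSDC} to identify the minimum over $\mathcal{R}(L,m-N,H(S))$ with $R_{ms}$. No gaps.
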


\begin{proof}
Let us first verify that 
\begin{equation}
\min_{(R_1,\ldots,R_L) \in \mathcal{R}(L,k,H)}\sum_{l=1}^{L}R_l = \frac{L}{k}H.
\label{S0}
\end{equation} 
For any rate tuple $(R_1,\ldots,R_L) \in \mathcal{R}(L,k,H)$, we have
\begin{equation}
\sum_{l \in D}R_l \geq H, \quad \forall D \in \Omega_L^{(k)}.
\end{equation}
Summing over all $D \in \Omega_L^{(k)}$ gives
\begin{equation}
\sum_{D \in \Omega_L^{(k)}}\sum_{l \in D}R_l = 
\left(
\begin{array}{c}
  L-1\\
  k-1   
\end{array}
\right)\sum_{l=1}^{L}R_l
\geq 
\left(
\begin{array}{c}
  L\\
  k   
\end{array}
\right)H.
\end{equation}
We thus have
\begin{equation}
\sum_{l=1}^{L}R_l \geq 
\frac{
\left(
\begin{array}{c}
  L\\
  k   
\end{array}
\right)}
{\left(
\begin{array}{c}
  L-1\\
  k-1   
\end{array}
\right)}H=\frac{L}{k}H
\label{S1}
\end{equation}
for any rate tuple $(R_1,\ldots,R_L) \in \mathcal{R}(L,k,H)$. On the other hand, note that the symmetrical rate tuple
\begin{equation}
\left(H/k,\ldots,H/k\right) \in \mathcal{R}(L,k,H)
\end{equation}
so
\begin{equation}
\min_{(R_1,\ldots,R_L) \in \mathcal{R}(L,k,H)}\sum_{l=1}^{L}R_l \leq \frac{L}{k}H.
\label{S2}
\end{equation} 
Combining \eqref{S1} and \eqref{S2} completes the proof of \eqref{S0}. 

Now by Theorem~\ref{thm:S-SSDC},  
\begin{equation}
R_{ms} = \min_{(R_1,\ldots,R_L) \in \mathcal{R}}\sum_{l=1}^{L}R_l=\min_{(R_1,\ldots,R_L) \in \mathcal{R}(L,m-N,H(S))}\sum_{l=1}^{L}R_l = \frac{L}{m-N}H(S).
\end{equation} 
This completes the proof of the corollary.
\end{proof}

\subsection{Proof of Theorem~\ref{thm:S-SSDC}}\label{sec:SL-pf}
Let us first show that the symmetrical rate tuple $(H(S)/(m-N),\ldots,H(S)/(m-N))$ is admissible by considering the following simple \emph{source-channel separation} scheme for the $(L,N,m)$ S-SSDC problem:
\begin{itemize}
\item First compress the source sequence $S^n$ into a source message $W$ using a fixed-length lossless source code. It is well known \cite[Ch.~3.2]{CT-B06} that the rate $R$ of the source message $W$ can be made arbitrarily close to the entropy rate $H(S)$ for sufficiently large block length $n$. 
\item Next, the source message $W$ is delivered to the legitimate receiver using a secure $$(L,N,m,(R_1,\ldots,R_L))$$ WN code. 
\end{itemize} 

The problem of secure coding over a WN was formally introduced in \cite{CY-ITS}. A generic WN $(\mathcal{G},s,\mathcal{U},\mathcal{A})$ consists of a directed acyclic network $\mathcal{G}$, a source node $s$, a set of user nodes $\mathcal{U}$, and a collection of sets of wiretapped edges $\mathcal{A}$. Each member of $\mathcal{A}$ may be fully accessed by an eavesdropper, but no eavesdropper may access more than one member of $\mathcal{A}$. The source node has access to a message $W$, which is intended for all user nodes in $\mathcal{U}$ but needs to be kept \emph{perfectly} secret from the eavesdroppers. The maximum achievable secrecy rate for $W$ is called the \emph{secrecy capacity} of the WN and is denoted by $C_s(\mathcal{G},s,\mathcal{U},\mathcal{A})$.

\begin{figure}[t]
\centerline{\scalebox{0.8}{\input{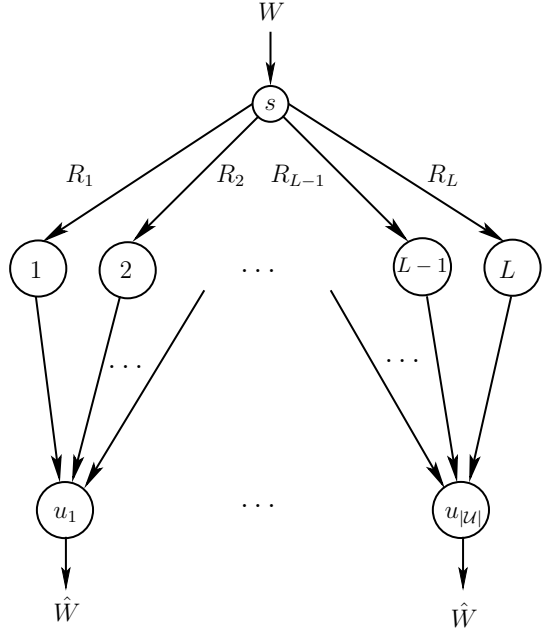}}}
\caption{Illustration of the $(L,N,m,(R_1,\ldots,R_L))$ WN.}
\label{fig:WN}
\end{figure}

An $(L,N,m,(R_1,\ldots,R_L))$ WN is a special WN with three layers of nodes: top, middle, and bottom. As illustrated in Fig.~\ref{fig:WN}, the only node in the top layer is the source node $s$. There are $L$ intermediate nodes in the middle layer, each corresponding to an encoder in the $(L,N,m)$ S-SSDC problem. For each $l=1,\ldots,L$, the source node $s$ is connected to the intermediate node $l$ by a channel $(s, l)$ with capacity $R_l$. There are
\begin{equation}
|\mathcal{U}|=\left(
\begin{array}{c}
  L   \\
  m   
\end{array}
\right)
\end{equation} 
user nodes in the bottom layer, each corresponding to a possible realization of the legitimate receiver in the $(L,N,m)$ S-SSDC problem and is connected to $m$ intermediate nodes through $m$ infinite-capacity channels. Finally, the collection of sets of wiretapped edges $\mathcal{A}$ is defined as 
\begin{equation}
\mathcal{A} := \left\{\{(s, l)|l \in A\}: A \in \Omega_L^{(N)}\right\}
\end{equation}
where each set of wiretapped edges in $\mathcal{A}$ corresponds to a possible realization of the eavesdropper in the $(L,N,m)$ S-SSDC problem.

Based on the aforementioned connection between the $(L,N,m)$ S-SSDC problem and the problem of secure coding over the $(L,N,m,(R_1,\ldots,R_L))$ WN, we have the following simple lemma.

\begin{lemma}\label{lemma:sep}
A nonnegative rate tuple $(R_1,\ldots,R_L)$ is admissible for the $(L,N,m)$ S-SSDC problem if the entropy rate of the source is less than or equal to the secrecy capacity of the $(L,N,m,(R_1,\ldots,R_L))$ WN, i.e.,
\begin{equation}
H(S) \leq C_s(L,N,m,(R_1,\ldots,R_L)).
\end{equation}
\end{lemma}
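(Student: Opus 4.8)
The plan is to turn the source--channel separation scheme sketched above into a rigorous argument. Fix $\epsilon>0$. By the fixed-length lossless source coding theorem \cite[Ch.~3.2]{CT-B06}, for all sufficiently large block length $n$ there is an encoder $f_n:\mathcal{S}^n\rightarrow\{1,\ldots,2^{nR}\}$ with $R\le H(S)+\epsilon/2$ and a decoder $g_n$ with $\mathrm{Pr}\{g_n(f_n(S^n))\neq S^n\}\le\epsilon/2$; write $W:=f_n(S^n)$. Since $R$ can be made arbitrarily close to $H(S)$ and $H(S)\le C_s(L,N,m,(R_1,\ldots,R_L))$, the definition of the secrecy capacity then supplies a secure $(L,N,m,(R_1,\ldots,R_L))$ WN code that conveys $W$ from the source node $s$ to every one of the $\binom{L}{m}$ user nodes with error probability at most $\epsilon/2$, while leaking \emph{nothing} about $W$ to any wiretap set in $\mathcal{A}$: $I(W;X_A)=0$ for every $A\in\Omega_L^{(N)}$, where $X_l$ is the symbol carried on edge $(s,l)$.

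The next step is to read off an $(n,(M_1,\ldots,M_L))$ S-SSDC code from this WN code via the correspondence built into the definition of the $(L,N,m,(R_1,\ldots,R_L))$ WN. The map computed at $s$ for edge $(s,l)$ is declared to be the encoder $e_l(S^n,K)$, with the private randomness of $s$ serving as the shared key $K$ (which is independent of $S^n$); since edge $(s,l)$ has capacity $R_l$, this gives $\frac1n\log M_l\le R_l+\epsilon$, i.e.\ \eqref{eq:S-Rate}. Because each user node receives $m$ infinite-capacity edges from $m$ distinct intermediate nodes, the user node indexed by $U\in\Omega_L^{(m)}$ sees precisely $X_U$, so its decoding map followed by $g_n$ is an admissible S-SSDC decoder $d_U$; a legitimate receiver with $|U|>m$ discards the surplus inputs. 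A union bound over the source-coding error and the WN decoding error then yields $\mathrm{Pr}\{d_U(X_U)\neq S^n\}\le\epsilon$, which is \eqref{eq:S-Rec}.

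It remains to check perfect secrecy \eqref{eq:S-Per}. Fix $A\in\Omega_L^{(N)}$. Since $X_A$ is a deterministic function of $(W,K)$, $K$ is independent of $(S^n,W)$, and $W=f_n(S^n)$, the triple $S^n\rightarrow W\rightarrow X_A$ forms a Markov chain; hence the data-processing inequality and the secrecy guarantee of the WN code give $I(S^n;X_A)\le I(W;X_A)=0$, i.e.\ $H(S^n|X_A)=H(S^n)$. Letting $\epsilon\downarrow 0$ shows that $(R_1,\ldots,R_L)$ is admissible.

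The only point I expect to need real care is the secrecy claim $I(W;X_A)=0$ for the \emph{actual}, and in general non-uniform, distribution of the source message $W$, since the secure WN codes of \cite{CY-ITS} are typically stated for a uniform message. This is not a genuine obstacle: the coset/key-randomization construction underlying such codes makes the conditional law of $X_A$ given $W=w$ the same for every $w$, hence $X_A$ is independent of $W$ under any input distribution; alternatively one may prepend an invertible one-time-pad--type randomization so that the message fed to the WN code is uniform. Everything else is routine bookkeeping on the network-to-code dictionary plus a union bound, consistent with the lemma being labeled ``simple''.
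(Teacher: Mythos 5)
Your proof is correct and follows essentially the same route as the paper, which simply asserts the lemma as a consequence of the two-stage source--channel separation scheme (lossless compression of $S^n$ into a message $W$, followed by delivery of $W$ over the $(L,N,m,(R_1,\ldots,R_L))$ WN with a secure network code) without writing out the verification of \eqref{eq:S-Rate}--\eqref{eq:S-Per}. Your extra care about the secrecy guarantee holding for the non-uniform message $W$ --- noting that perfect secrecy under a uniform input forces the conditional law of $X_A$ given $W$ to be constant in $w$, hence independence of $W$ and $X_A$ under any input distribution, after which the Markov chain $S^n \rightarrow W \rightarrow X_A$ and data processing give $I(S^n;X_A)=0$ --- correctly fills the one point the paper leaves implicit.
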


In general, characterizing the exact secrecy capacity of a WN can be very difficult. For a generic WN $(\mathcal{G},s,\mathcal{U},\mathcal{A})$, the following secrecy rate 
\begin{equation}
R_s = \min_{u \in \mathcal{U}, A \in \mathcal{A}}\left[mincut(s,u)-mincut(s,A)\right]
\end{equation}
is known \cite{CY-ITS} to be achievable. Here, $mincut(s,u)$ denotes the value of a minimum cut between the source node $s$ and the user node $u$, and $mincut(s,A)$ denotes the value of a minimum cut between the source node $s$ and the set of wiretapped edges $A$. For the $(L,N,m,(H(S)/(m-N),\ldots,H(S)/(m-N)))$ WN, it is straightforward to verify that 
\begin{equation}
mincut(s,u) = \frac{m}{m-N}H(S), \quad \forall u \in \mathcal{U}
\end{equation}
and
\begin{equation}
mincut(s,A) = \frac{N}{m-N}H(S), \quad \forall A \in \mathcal{A}.
\end{equation}
Hence, the secrecy rate
\begin{equation}
R_s = \frac{m}{m-N}H(S)-\frac{N}{m-N}H(S)=H(S)
\end{equation}
is achievable for the $(L,N,m,(H(S)/(m-N),\ldots,H(S)/(m-N)))$ WN. We summarize this result in the following lemma. 
\begin{lemma}\label{lemma:cy}
For any $(L,N,m,(H(S)/(m-N),\ldots,H(S)/(m-N)))$ WN, the secrecy capacity can be bounded from below as
\begin{equation}
C_s(L,N,m,(H(S)/(m-N),\ldots,H(S)/(m-N))) \geq H(S).
\end{equation}
\end{lemma}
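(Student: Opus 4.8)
The plan is to apply directly the achievability result of \cite[Th.~3]{CY-ITS}, which guarantees that the secrecy rate
\begin{equation}
R_s = \min_{u \in \mathcal{U},\, A \in \mathcal{A}}\left[mincut(s,u)-mincut(s,A)\right]
\end{equation}
is achievable over \emph{any} WN, and then to evaluate the two min-cut quantities for the specific three-layer network $(L,N,m,(H(S)/(m-N),\ldots,H(S)/(m-N)))$. Once both are computed, the claimed bound $C_s \geq H(S)$ follows because the secrecy capacity is by definition no smaller than any achievable secrecy rate.

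First I would compute $mincut(s,u)$ for an arbitrary user node $u \in \mathcal{U}$. By construction $u$ is joined to a size-$m$ subset of the intermediate nodes by infinite-capacity edges, and each intermediate node $l$ receives from $s$ only along the single edge $(s,l)$ of capacity $H(S)/(m-N)$. Since no finite-valued cut can contain an infinite-capacity edge, every $s$--$u$ cut must sever the $m$ edges $(s,l)$ leading into the intermediate nodes adjacent to $u$; conversely, removing exactly those $m$ edges disconnects $s$ from $u$. Hence $mincut(s,u)=m\cdot H(S)/(m-N)=\frac{m}{m-N}H(S)$ for every $u$. Next I would compute $mincut(s,A)$ for an arbitrary wiretap set $A=\{(s,l):l\in A\}\in\mathcal{A}$ with $|A|=N$. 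Because the $N$ wiretapped edges all emanate directly from the source node, the max-flow from $s$ into this edge set is simply the sum of their capacities, so $mincut(s,A)=N\cdot H(S)/(m-N)=\frac{N}{m-N}H(S)$ for every $A$. Substituting the two values into the achievable-rate formula gives
\begin{equation}
R_s=\frac{m}{m-N}H(S)-\frac{N}{m-N}H(S)=H(S),
\end{equation}
and therefore $C_s(L,N,m,(H(S)/(m-N),\ldots,H(S)/(m-N)))\geq R_s=H(S)$.

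The only care needed — and the step I expect to be the main (minor) obstacle — is the bookkeeping in the two min-cut evaluations: confirming that the infinite-capacity edges between the middle and bottom layers force every finite $s$--$u$ cut to live entirely among the first-layer edges $(s,l)$, and pinning down the correct notion of a cut between the source and a \emph{set of edges} so that $mincut(s,A)$ is indeed the total capacity of the $N$ source-incident wiretapped edges. Neither point is difficult, but both should be stated cleanly, since the entire lemma rests on them.
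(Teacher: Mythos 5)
Your proposal is correct and follows essentially the same route as the paper: invoke the achievable secrecy rate $R_s = \min_{u,A}[mincut(s,u)-mincut(s,A)]$ from \cite[Th.~3]{CY-ITS} and evaluate the two min-cuts as $\frac{m}{m-N}H(S)$ and $\frac{N}{m-N}H(S)$ for the three-layer network. The paper simply states these min-cut values as ``straightforward to verify,'' so your explicit bookkeeping of the infinite-capacity edges is a welcome (and correct) elaboration rather than a deviation.
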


Combining Lemmas~\ref{lemma:sep} and \ref{lemma:cy} proves the admissibility of the symmetrical rate tuple 
$$\left(H(S)/(m-N),\ldots,H(S)/(m-N)\right).$$

Building on the previous result, next we show that \emph{every} rate tuple in $\mathcal{R}(L,m-N,H(S))$ is admissible. By Proposition~\ref{lemma:poly},  $\mathcal{R}(L,m-N,H(S))$ is a pointed polyhedron with the characteristic cone given by $\{(R_1,\ldots,R_{L}): R_l \geq 0, \; \forall l=1,\ldots,L\}$. Thus, to show that all rate tuples in $\mathcal{R}(L,m-N,H(S))$ are admissible, it is sufficient to show that all \emph{corner points} of $\mathcal{R}(L,m-N,H(S))$ are admissible.

We shall consider proof by induction, where the induction is on the total number of encoders $L$. First consider the base case with $L=2$. When $L=2$, there is only one nontrivial $(L,N,m)$ S-SSDC problem: the $(2,1,2)$ S-SSDC problem. Note that the rate region $\mathcal{R}(2,1,H(S))$ has only one corner point: the symmetrical rate pair $(H(S),H(S))$, whose admissibility has already been established. We thus conclude that every rate tuple in $\mathcal{R}(2,1,H(S))$ is admissible for the $(2,1,2)$ S-SSDC problem. 

Now assume that for every nontrivial $(L-1,N',m')$ S-SSDC problem, all rate tuples in $\mathcal{R}(L',m'-N',H(S))$ are admissible. Based on this assumption, next we show that all corner points of $\mathcal{R}(L,m-N,H(S))$ are admissible for the $(L,N,m)$ S-SSDC problem.  We shall consider the corner points with all \emph{strictly} positive entries (it they exist) and those with \emph{at least one zero} entry separately:
\begin{itemize}
\item[1)] By Proposition~\ref{lemma:poly}, the symmetrical rate tuple $(H(S)/(m-N),\ldots,H(S)/(m-N))$ is the \emph{only} corner point of $\mathcal{R}(L,m-N,H(S))$ with all \emph{strictly} positive entries (if it exists), whose admissibility has already been established.
\item[2)] To prove the admissibility of the corner points of $\mathcal{R}(L,m-N,H(S))$ with \emph{at least one zero} entry, by the \emph{symmetry} of the rate region $\mathcal{R}(L,m-N,H(S))$ we may consider without loss of generality those with $R_L=0$. Note that if an $(n,(M_1,\ldots,M_{L-1}))$ code satisfies both asymptotically perfect reconstruction constraint \eqref{eq:S-Rec} and the perfect secrecy constraint \eqref{eq:S-Per} for the $(L-1,N,m-1)$ S-SSDC problem, an $(n,(M_1,\ldots,M_{L-1},1))$ code with the \emph{same} encoding functions for encoders 1 to $L-1$ (encoder $L$ uses a constant encoding function) also satisfies (trivially) both constraints for the $(L,N,m)$ S-SSDC problem. Thus, if $(R_1,\ldots,R_{L-1})$ is an admissible rate tuple for the $(L-1,N,m-1)$ S-SSDC problem, then $(R_1,\ldots,R_{L-1},0)$ is also an admissible rate tuple for the $(L,N,m)$ S-SSDC problem. By the induction assumption, all rate tuples in $\mathcal{R}(L-1,m-N-1,H(S))$ are admissible for the $(L-1,N,m-1)$ problem. Combined with Proposition~\ref{lemma:poly}, this implies that all rate tuples in 
\begin{equation}
\begin{array}{l}
\{(R_1,\ldots,R_{L-1},0): (R_1,\ldots,R_{L-1}) \in \mathcal{R}(L-1,m-N-1,H(S))\}\\
\hspace{120pt}  = \; \{(R_1,\ldots,R_L) \in \mathcal{R}(L,m-N,H(S)): R_L=0\}
\end{array}
\end{equation}
i.e., the $R_L=0$ slice of $\mathcal{R}(L,m-N,H(S))$, are admissible for the $(L,N,m)$ S-SSDC problem. As a special case, all corner points of $\mathcal{R}(L,m-N,H(S))$ with $R_L=0$ are admissible for the $(L,N,m)$ S-SSDC problem. 
\end{itemize}

Combining Steps 1 and 2 proves that all corner points of $\mathcal{R}(L,m-N,H(S))$ are admissible. We thus conclude that all rate tuples in $\mathcal{R}(L,m-N,H(S))$ are admissible. This completes the induction step and hence the proof of the theorem.

\section{Secure Symmetrical Multilevel Diversity Coding}\label{sec:ML}
\subsection{Problem Statement}\label{sec:ML-ps}
Let $\{S_1[t],\ldots,S_{L-N}[t]\}_{t=1}^{\infty}$ be a collection of $L-N$ independent discrete memoryless sources with time index $t$ and let $S_k^n:=(S_k[1],\ldots,S_k[n])$. An $(L,N)$ S-SMDC problem consists of a set of $L$ encoders, a legitimate receiver who has access to a subset $U$ of the encoder outputs, and an eavesdropper who has access to a subset $A$ of the encoder outputs. Which subsets of the encoder outputs are available at the legitimate receiver and the eavesdropper are \emph{unknown} a priori at the encoders. However, no matter which subsets $U$ and $A$ actually occur, the legitimate receiver must be able to asymptotically perfectly reconstruct the sources $(S_1,\ldots,S_k)$ whenever $|U|=N+k$, and all sources $(S_1,\ldots,S_{L-N})$ must be kept perfectly secure from the eavesdropper as long as $|A|\leq N$. 

Formally, an $(n,(M_1,\ldots,M_L))$ code is defined by a collection of $L$ encoding functions
\begin{equation}
e_l: \prod_{k=1}^{L-N}\mathcal{S}_k^n \times \mathcal{K} \rightarrow \{1,\ldots,M_l\}, \quad \forall l=1,\ldots,L
\end{equation}
and decoding functions
\begin{equation}
d_U: \prod_{l \in U}\{1,\ldots,M_l\} \rightarrow \prod_{k=1}^{|U|-N}\mathcal{S}_k^n, \quad \forall U \subseteq \Omega_L \; \mbox{s.t.} \; |U| \geq N+1.
\end{equation}
Here, $\mathcal{K}$ is the key space accessible to all $L$ encoders. A nonnegative rate tuple $(R_1,\ldots,R_L)$ is said to be \emph{admissible} if for every $\epsilon>0$, there exits, for sufficiently large block length $n$, an $(n,(M_1,\ldots,M_L))$ code such that:
\begin{itemize}
\item (Rate constraints)
\begin{equation}
\frac{1}{n}\log M_l \leq R_l +\epsilon, \quad \forall l =1,\ldots,L;
\label{eq:M-Rate}
\end{equation}
\item (Asymptotically perfect reconstruction at the legitimate receiver)
\begin{equation}
\mathrm{Pr}\{d_U(X_U) \neq (S_1^n,\ldots,S_{|U|-N}^n)\} \leq \epsilon, \quad \forall U \subseteq \Omega_L \; \mbox{s.t.} \; |U| \geq N+1
\label{eq:M-Rec}
\end{equation}
where $X_l:=e_l((S_1^n,\ldots,S_{L-N}^n),K)$ is the output of the $l$th encoder, $K$ is the secret key shared by all $L$  encoders, and $X_U:=\{X_l: l \in U\}$; and
\item (Perfect secrecy at the eavesdropper)
\begin{equation}
H(S_1^n,\ldots,S_{L-N}^n|X_A) = H(S_1^n,\ldots,S_{L-N}^n), \quad \forall A \subseteq \Omega_L \; \mbox{s.t.} \; |A|\leq N
\label{eq:M-Per}
\end{equation}
i.e., observing the encoder outputs $X_A$ does not provide \emph{any} information regarding to the source sequences $(S_1^n,\ldots,S_{L-N}^n)$. 
\end{itemize}

\subsection{Main Results}\label{sec:ML-main}
Motivated by the success of \cite{Yeu-IT95,RYH-IT97,YZ-IT99} on the classical SMDC problem without any secrecy constraints, here we focus on superposition coding where the output of the $l$th encoder $X_l$ is given by
\begin{equation}
X_l=\left(X_l^{(1)},\ldots,X_l^{(L-N)}\right)
\label{eq:SP}
\end{equation}
and $X_l^{(k)}$ is the coded message for source $S_k$ at the $l$th encoder using an $(L,N,N+k)$ S-SSDC code. Note here that all sources are encoded \emph{separately} at the encoders, and there is \emph{no} coding across different sources. Thus, if $(R_1^{(k)},\ldots,R_L^{(k)})$ is an admissible rate tuple for the $(L,N,N+k)$ S-SSDC problem with source $S_k$, then the rate tuple
\begin{equation}
(R_1,\ldots,R_L)=\left(\sum_{k=1}^{L-N}R_1^{(k)},\ldots,\sum_{k=1}^{L-N}R_L^{(k)}\right)
\label{eq:SC}
\end{equation}
is admissible for the $(L,N)$ S-SMDC problem. 

By Corollary~\ref{cor:S-SSDC}, the minimum sum rate for the $(L,N,N+k)$ S-SSDC problem with source $S_k$ is given by $(L/k)H(S_k)$. It follows that 
$\sum_{k=1}^{L-N}(L/k)H(S_k)$ is the minimum sum rate that can be achieved by superposition coding for the $(L,N)$ S-SMDC problem. The main result of this section is that $\sum_{k=1}^{L-N}(L/k)H(S_k)$ is in fact the minimum sum rate that can be achieved by \emph{any} coding scheme for the $(L,N)$ S-SMDC problem. Thus, superposition coding is optimal in terms of achieving the minimum sum rate for the general S-SMDC problem. We summarize this result in the following theorem.

\begin{theorem}\label{thm:S-SMDC1}
Superposition coding can achieve the minimum sum rate for the general $(L,N)$ S-SMDC problem, which is given by 
\begin{equation}
R_{ms} = \sum_{k=1}^{L-N}\frac{L}{k}H(S_{k}).
\end{equation}
\end{theorem}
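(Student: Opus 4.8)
I would split the proof into achievability and converse. Achievability is immediate from Corollary~\ref{cor:S-SSDC}: for each $k=1,\ldots,L-N$ the symmetric tuple $(H(S_k)/k,\ldots,H(S_k)/k)$ is admissible for the $(L,N,N+k)$ S-SSDC problem with source $S_k$, so superposing these codes as in \eqref{eq:SP}--\eqref{eq:SC} gives an admissible $(L,N)$ S-SMDC code with sum rate $\sum_{k=1}^{L-N}(L/k)H(S_k)$. Hence the real content is the converse: every admissible $(R_1,\ldots,R_L)$ satisfies $\sum_{l=1}^{L}R_l\ge\sum_{k=1}^{L-N}(L/k)H(S_k)$.

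For the converse I would first pass, using $\tfrac1n H(X_l)\le\tfrac1n\log M_l\le R_l+\epsilon$, to the single-shot statement: for any code meeting \eqref{eq:M-Rec} (with error $\epsilon$) and \eqref{eq:M-Per}, the encoder outputs obey $\sum_{l=1}^{L}H(X_l)\ge n\sum_{k=1}^{L-N}(L/k)H(S_k)-n\xi_n$ with $\xi_n\to0$; here I use that the $S_k^n$ are independent and memoryless, so $H(S_k^n)=H(S_k^n\,|\,S_1^n,\ldots,S_{k-1}^n)=nH(S_k)$. I would prove this entropy inequality by induction on $L$, carrying along its conditional form in which all entropies are additionally conditioned on an auxiliary variable $Z$ that is independent of $(S_1^n,\ldots,S_{L-N}^n)$ (initially $Z$ is trivial).

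The engine of the induction is that, by \eqref{eq:M-Per} applied to a singleton set $A=\{l_0\}$ (allowed since $|A|=1\le N$), every single encoder output $X_{l_0}$ is independent of the source collection. Fixing $l_0$ and absorbing $X_{l_0}$ into $Z$, the remaining outputs $(X_l)_{l\ne l_0}$ satisfy exactly the reconstruction and secrecy constraints of an $(L-1,N-1)$ S-SMDC problem for the same sources $S_1,\ldots,S_{L-N}$: a set $U'\subseteq\Omega_L\setminus\{l_0\}$ with $|U'|\ge(N-1)+k$ together with $\{l_0\}$ has size $\ge N+k$ and therefore reconstructs $(S_1^n,\ldots,S_k^n)$, while a set $A'$ with $|A'|\le N-1$ together with $\{l_0\}$ has size $\le N$ and therefore leaks nothing; moreover $(X_{l_0},Z)$ is still source-independent. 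Invoking the conditional induction hypothesis gives $\sum_{l\ne l_0}H(X_l\,|\,Z,X_{l_0})\ge n\sum_{k=1}^{L-N}\tfrac{L-1}{k}H(S_k)-n(L-1)\xi_n$; then using $H(X_l\,|\,Z)\ge H(X_l\,|\,Z,X_{l_0})$, summing this over all $l_0\in\Omega_L$ (each $X_l$ appearing $L-1$ times), and dividing by $L-1$ promotes the factor $L-1$ to $L$ and yields $\sum_l H(X_l\,|\,Z)\ge n\sum_k(L/k)H(S_k)-nL\xi_n$, closing the step. The base case is $N=0$, which is the classical SMDC problem whose minimum sum rate $\sum_k(L/k)H(S_k)$ was established by Roche, Yeung, and Hau~\cite{RYH-IT97}; in the special case $L-N=1$ one may instead stop immediately, since the single-source problem is precisely the S-SSDC problem already resolved by Corollary~\ref{cor:S-SSDC}.

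The step I expect to require the most care is justifying that the reduction to $(L-1,N-1)$ is \emph{lossless}: the entire accounting works only because a single encoder output carries zero information about the sources (a consequence of $N\ge1$), so that conditioning on $X_{l_0}$ neither shrinks the relevant source entropies nor must be compensated for in the rate bound — for $N=0$ this property fails and a genuinely different argument, namely \cite{RYH-IT97}, is needed, which is why the classical result appears as the base case. The remaining work is bookkeeping: tracking the conditional reconstruction and secrecy hypotheses through the induction, and checking that $Z$ stays source-independent because it is always a union of at most $N$ encoder outputs.
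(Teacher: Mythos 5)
Your proposal is correct in outline, but it takes a genuinely different route from the paper. The paper's converse is self-contained: it inducts on the source level $\alpha=1,\ldots,L-N$, using Proposition~\ref{lemma:ML} (Fano plus the perfect-secrecy condition \eqref{eq:M-Per}, applied with $|A|=N$) to peel off $nH(S_\alpha)$ at each level, and Han's subset inequality to pass from decoder sets of size $\alpha-1$ to size $\alpha$ while averaging over all pairs $(A,D)$. You instead induct on the number of encoders, using \eqref{eq:M-Per} with $|A|\le N$ to argue that conditioning on any $\le N$ encoder outputs is ``free,'' which converts an $(L,N)$ instance into a conditional $(L-1,N-1)$ instance; your averaging over $l_0$ at each of the $N$ stages is, in composition, the paper's averaging over all $A\in\Omega_L^{(N)}$, and your base case is exactly the classical converse that the paper's Han-inequality induction reproves (conditioned on $X_A$). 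What your route buys is modularity — the secrecy mechanism is cleanly separated from the combinatorial core, and the arithmetic promotion of $L-1$ to $L$ at each stage is transparent. What it costs is the base case: you cannot invoke the Roche--Yeung--Hau theorem as a black box, because what you need is not their rate-region statement but the conditional single-shot entropy inequality $\sum_{l}H(X_l\mid Z)\ge n\sum_{k}\frac{L-N}{k}H(S_k)-n\xi_n$, where reconstruction is permitted to use the side information $Z$ and $Z$ is independent of the sources. This does hold — their converse is itself a Fano-plus-Han entropy argument that survives uniform conditioning on a source-independent auxiliary — but it must be re-derived in that form, at which point you have essentially rewritten the paper's $\Delta_\alpha$ induction with $N=0$. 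Your reduction step itself is sound: the set-size bookkeeping ($|U'\cup A|\ge N+k$ for reconstruction, $|A'\cup A|\le N$ for secrecy, $|Z|\le N$ throughout), the preservation of the source count $L-N=(L-1)-(N-1)$, and the $\frac{1}{L'-1}\sum_{l_0}$ averaging that converts the coefficient $\frac{L'-1}{k}$ into $\frac{L'}{k}$ all check out.
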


A proof of the theorem is provided in Sec~\ref{sec:MS}. The proof uses an induction argument and is built on the classical subset inequality of Han \cite[Ch.~17.6]{CT-B06} and the following key proposition.

\begin{prop}\label{lemma:ML}
For any $(n,(M_1,\ldots,M_L))$ code that satisfies both asymptotically perfect reconstruction constraint \eqref{eq:M-Rec} and the perfect secrecy constraint \eqref{eq:M-Per}, we have
\begin{equation}
H(X_D|S_1^n,\ldots,S_{k-1}^n,X_A) \geq nH(S_k)+H(X_D|S_1^n,\ldots,S_k^n,X_A)-n\delta_k(n,\epsilon)
\end{equation}
where 
\begin{equation}
\delta_k(n,\epsilon) := 1/n+\epsilon\sum_{\alpha=1}^{k}\log|\mathcal{S}_\alpha|
\end{equation}
for any $A \in \Omega_L^{(N)}$ and $D \in \Omega_{L}^{(k)}$ such that $A \cap D =\emptyset$ and any $k=1,\ldots,L-N$.
\end{prop}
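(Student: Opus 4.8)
The plan is to read the claimed bound as a short chain of elementary information inequalities that ties together the reconstruction constraint \eqref{eq:M-Rec}, the perfect secrecy constraint \eqref{eq:M-Per}, and the mutual independence of the sources. Writing $Y := (S_1^n,\ldots,S_{k-1}^n,X_A)$ for brevity, the gap between the two conditional entropies of $X_D$ appearing on the two sides is, by the chain rule, exactly a conditional mutual information:
\[
H(X_D\mid Y)-H(X_D\mid S_k^n,Y)=I(S_k^n;X_D\mid Y)=H(S_k^n\mid Y)-H(S_k^n\mid X_D,Y).
\]
Hence it suffices to prove the two bounds $H(S_k^n\mid Y)\ge nH(S_k)$ and $H(S_k^n\mid X_D,Y)\le n\delta_k(n,\epsilon)$, and then subtract.

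For the first bound I would use the perfect secrecy constraint \eqref{eq:M-Per}, i.e. $I(S_1^n,\ldots,S_{L-N}^n;X_A)=0$. Expanding this by the chain rule into a sum of nonnegative terms forces each term to vanish, in particular $I(S_k^n;X_A\mid S_1^n,\ldots,S_{k-1}^n)=0$; combined with the independence of $S_1,\ldots,S_{L-N}$ this gives $H(S_k^n\mid Y)=H(S_k^n\mid S_1^n,\ldots,S_{k-1}^n)=H(S_k^n)=nH(S_k)$. For the second bound the key structural observation is that, since $A\cap D=\emptyset$ with $|A|=N$ and $|D|=k$, the set $U:=A\cup D$ has $|U|=N+k\ge N+1$, so it is a legitimate-receiver set: by \eqref{eq:M-Rec} the decoder $d_U$ reconstructs $(S_1^n,\ldots,S_k^n)$ from $X_U=(X_A,X_D)$ with error probability at most $\epsilon$. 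Fano's inequality then yields $H(S_1^n,\ldots,S_k^n\mid X_A,X_D)\le 1+\epsilon\,n\sum_{\alpha=1}^{k}\log|\mathcal{S}_\alpha|=n\delta_k(n,\epsilon)$, and since $H(S_k^n\mid X_D,Y)\le H(S_1^n,\ldots,S_k^n\mid X_A,X_D)$ by the chain rule, the second bound follows.

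I do not expect a real obstacle: the whole argument is the interplay between the two constraints, mediated by the single observation that a size-$N$ wiretap set disjoint from a size-$k$ decoding-relevant set jointly form a size-$(N+k)$ legitimate receiver. The only care needed is the bookkeeping around the extra conditioning on $X_A$ and on $S_1^n,\ldots,S_{k-1}^n$---handled by the secrecy constraint and by Fano plus the chain rule, respectively---and checking that the constant and error terms collect precisely into $n\delta_k(n,\epsilon)$.
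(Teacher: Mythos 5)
Your proposal is correct and follows essentially the same route as the paper's proof: both arguments rest on Fano's inequality applied to the legitimate-receiver set $U=A\cup D$ of size $N+k$, the perfect-secrecy constraint on $X_A$, and the independence and memorylessness of the sources. Rewriting the gap as the conditional mutual information $I(S_k^n;X_D\mid S_1^n,\ldots,S_{k-1}^n,X_A)$ and bounding its two constituent entropies separately is just a cleaner packaging of the paper's telescoping chain of entropy (in)equalities.
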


\subsection{Proof of the Main Results} \label{sec:MS}
Let us first prove Proposition~\ref{lemma:ML}. Since $|A|=N$, $|D|=k$, and $A\cap D =\emptyset$, we have $|D \cup A|=N+k$. For any $(n,(M_1,\ldots,M_L))$ code that satisfies both asymptotically perfect reconstruction constraint \eqref{eq:M-Rec} and the perfect secrecy constraint \eqref{eq:M-Per}, we have by Fano's inequality
\begin{equation}
H(S_1^n,\ldots,S_{k}^n|X_D,X_A)\leq n\delta_k(n,\epsilon) \label{T660}
\end{equation}
and
\begin{equation}
H(S_1^n,\ldots,S_{k}^n|X_A)=H(S_1^n,\ldots,S_{k}^n).\label{T661}
\end{equation}
Thus,
\begin{eqnarray}
&& H(X_D|S_1^n,\ldots,S_{k-1}^n,X_A)+n\delta_k(n,\epsilon)\nonumber\\
&& \hspace{20pt} \geq \; H(X_D|S_1^n,\ldots,S_{k-1}^n,X_A)+H(S_1^n,\ldots,S_{k}^n|X_D,X_A)
\label{T900}\\
&& \hspace{20pt} \geq \; H(X_D|S_1^n,\ldots,S_{k-1}^n,X_A)+H(S_{k}^n|S_1^n,\ldots,S_{k-1}^n,X_D,X_A)
\\
&& \hspace{20pt} = \; H(X_D,S_{k}^n|S_1^n,\ldots,S_{k-1}^n,X_A)\\
&& \hspace{20pt} = \; H(S_{k}^n|S_1^n,\ldots,S_{k-1}^n,X_A)+H(X_D|S_1^n,\ldots,S_{k}^n,X_A)\\
&& \hspace{20pt} = \; H(S_1^n,\ldots,S_{k}^n|X_A)-H(S_1^n,\ldots,S_{k-1}^n|X_A)+H(X_D|S_1^n,\ldots,S_{k}^n,X_A)\\
&& \hspace{20pt} = \; H(S_1^n,\ldots,S_{k}^n)-H(S_1^n,\ldots,S_{k-1}^n|X_A)+H(X_D|S_1^n,\ldots,S_{k}^n,X_A) \label{T901}\\
&& \hspace{20pt} \geq \; H(S_1^n,\ldots,S_{k}^n)-H(S_1^n,\ldots,S_{k-1}^n)+H(X_D|S_1^n,\ldots,S_{k}^n,X_A)\label{T902}\\
&& \hspace{20pt} = \; H(S_{k}^n|S_1^n,\ldots,S_{k-1}^n)+H(X_D|S_1^n,\ldots,S_{k}^n,X_A)\\
&& \hspace{20pt} = \; H(S_{k}^n)+H(X_D|S_1^n,\ldots,S_{k}^n,X_A) \label{T903}\\
&& \hspace{20pt} = \; nH(S_{k})+H(X_D|S_1^n,\ldots,S_{k}^n,X_A) \label{T904}
\end{eqnarray}
where \eqref{T900} follows from \eqref{T660}, \eqref{T901} follows from \eqref{T661}, \eqref{T902} follows from the fact that conditioning reduces entropy, \eqref{T903} follows from the fact that the sources $S_1,\ldots,S_{k}$ are mutually independent, and \eqref{T904} follows from the fact that the source $S_{k}$ is memoryless. Moving $n\delta_k(n,\epsilon)$ to the right-hand side of the inequality completes the proof of Proposition~\ref{lemma:ML}.

Building on the result of Proposition~\ref{lemma:ML}, next let us show that for any $(n,(M_1,\ldots,M_L))$ code that satisfies both asymptotically perfect reconstruction constraint \eqref{eq:M-Rec} and the perfect secrecy constraint \eqref{eq:M-Per} and any $\alpha=1,\ldots,L-N$, we have
\begin{equation}
\sum_{l=1}^{L}H(X_l) \geq \sum_{k=1}^{\alpha}\frac{nL}{k}H(S_k)+\Delta_\alpha-\sum_{k=1}^{\alpha}nL\delta_k(n,\epsilon)
\label{eq:clm}
\end{equation}
where
\begin{equation}
\Delta_\alpha := \frac{L}{
\left(
\begin{array}{c}
  L   \\
  N   
\end{array}
\right)
\left(
\begin{array}{c}
  L-N   \\
  \alpha   
\end{array}
\right)
}
\sum_{A \in \Omega_L^{(N)}}\sum_{D \in \Omega_{L \setminus A}^{(\alpha)}}\frac{H(X_D|S_1^n,\ldots,S_{\alpha}^n,X_A)}{\alpha}.
\end{equation}

We shall consider proof by induction, where the induction is on $\alpha$. First consider the base case with $\alpha=1$. Let $A \in \Omega_L^{(N)}$ and let $l \in \Omega_L \setminus A$. Applying Proposition~\ref{lemma:ML} with $k=1$, we have
\begin{equation}
H(X_l) \geq nH(S_1)+H(X_l|S_1^n,X_A)-n\delta_1(n,\epsilon). \label{T802}
\end{equation}
Averaging \eqref{T802} over all $l \in \Omega_L\setminus A$ and all $A \in \Omega_L^{(N)}$, we have
\begin{equation}
\frac{1}{
\left(
\begin{array}{c}
  L   \\
  N   
\end{array}
\right)
\left(
\begin{array}{c}
  L-N   \\
  1   
\end{array}
\right)}\sum_{A\in \Omega_L^{(N)}}\sum_{l \in \Omega_L\setminus A}H(X_l) \geq nH(S_1)+\frac{1}{L}\Delta_1-n\delta_1(n,\epsilon).
\end{equation}
Note that 
\begin{equation}
\frac{1}{\left(
\begin{array}{c}
  L   \\
  N   
\end{array}
\right)
\left(
\begin{array}{c}
  L-N   \\
  1   
\end{array}
\right)}\sum_{A\in \Omega_L^{(N)}}\sum_{l \in \Omega_L\setminus A}H(X_l) = \frac{
1}{L}\sum_{l=1}^{L}H(X_l).
\end{equation}
We thus have
\begin{equation}
\sum_{l=1}^LH(X_l) \geq nL H(S_1)+\Delta_1-nL\delta_1(n,\epsilon)
\end{equation}
which completes the proof of the base case.

Now assume that \eqref{eq:clm} holds for $\alpha-1$ for some $2 \leq \alpha \leq L-N$. Based on this assumption, next we show that \eqref{eq:clm} also holds for $\alpha$. By the classical subset inequality of Han \cite[Ch.~17.6]{CT-B06}, for any $A \in \Omega_L^{(N)}$ we have
\begin{equation}
\begin{array}{l}
\frac{1}{\left(
\begin{array}{c}
  L-N   \\
  \alpha-1   
\end{array}
\right)}\sum_{D \in \Omega_{L \setminus A}^{(\alpha-1)}}\frac{H(X_D|S_1^n,
\ldots,S_{\alpha-1}^n,X_A)}{\alpha-1}\\
\hspace{100pt} \geq \; 
\frac{1}{\left(
\begin{array}{c}
  L-N   \\
  \alpha   
\end{array}
\right)}\sum_{D \in \Omega_{L \setminus A}^{(\alpha)}}\frac{H(X_D|S_1^n,
\ldots,S_{\alpha-1}^n,X_A)}{\alpha}.
\end{array}
\end{equation}
It follows that
\begin{equation}
\Delta_{\alpha-1} \geq \frac{L}{
\left(
\begin{array}{c}
  L   \\
  N   
\end{array}
\right)
\left(
\begin{array}{c}
  L-N   \\
  \alpha   
\end{array}
\right)
}
\sum_{A \in \Omega_L^{(N)}}\sum_{D \in \Omega_{L \setminus A}^{(\alpha)}}\frac{H(X_D|S_1^n,
\ldots,S_{\alpha-1}^n,X_A)}{\alpha}.
\label{eq:Han}
\end{equation}
By Proposition~\ref{lemma:ML}, for any $A \in \Omega_L^{(N)}$ and any $D \in \Omega_{L \setminus A}^{(\alpha)}$ we have
\begin{equation}
H(X_D|S_1^n,\ldots,S_{\alpha-1}^n,X_A) \geq nH(S_\alpha)+H(X_D|S_1^n,\ldots,S_\alpha^n,X_A)-n\delta_n(\alpha,\epsilon).
\label{T2004}
\end{equation}
Substituting \eqref{T2004} into \eqref{eq:Han} gives
\begin{eqnarray}
\Delta_{\alpha-1} & \geq & \frac{L}{
\left(
\begin{array}{c}
  L   \\
  N   
\end{array}
\right)
\left(
\begin{array}{c}
  L-N   \\
  \alpha   
\end{array}
\right)
}
\sum_{A \in \Omega_L^{(N)}}\sum_{D \in \Omega_{L \setminus A}^{(\alpha)}}\nonumber\\
&& \hspace{120pt} \frac{nH(S_{\alpha})+H(X_D|S_1^n,
\ldots,S_{\alpha}^n,X_A)-n\delta_\alpha(n,\epsilon)}{\alpha}\\
& = & \frac{nL}{\alpha}H(S_{\alpha})+\Delta_{\alpha}-nL\delta_\alpha(n,\epsilon).
\label{eq:Delta}
\end{eqnarray}
By the induction assumption,
\begin{eqnarray}
\sum_{l=1}^{L}H(X_l) & \geq & \sum_{k=1}^{\alpha-1}\frac{nL}{k}H(S_k)+\Delta_{\alpha-1}-\sum_{k=1}^{\alpha-1}nL\delta_k(n,\epsilon)\\
& \geq & \sum_{k=1}^{\alpha-1}\frac{nL}{k}H(S_k)+\left(\frac{nL}{\alpha}H(S_{\alpha})+\Delta_{\alpha}-nL\delta_\alpha(n,\epsilon)\right)-\sum_{k=1}^{\alpha-1}nL\delta_k(n,\epsilon)\\
& = & \sum_{k=1}^{\alpha}\frac{nL}{k}H(S_k)+\Delta_{\alpha}-\sum_{k=1}^{\alpha}nL\delta_k(n,\epsilon).
\end{eqnarray}
This completes the proof of the induction step and hence \eqref{eq:clm}.

Finally, let $\alpha=L-N$ in \eqref{eq:clm}. For any admissible rate tuple $(R_1,\ldots,R_L)$ and any $\epsilon>0$, we have
\begin{eqnarray}
n\sum_{l=1}^{L}(R_l+\epsilon) & \geq & \sum_{l=1}^{L}H(X_l)\\
& \geq & \sum_{k=1}^{L-N}\frac{nL}{k}H(S_k)+\Delta_{L-N}-\sum_{k=1}^{L-N}nL\delta_k(n,\epsilon)\\
& \geq & \sum_{k=1}^{L-N}\frac{nL}{k}H(S_k)-\sum_{k=1}^{L-N}nL\delta_k(n,\epsilon)\label{T700}
\end{eqnarray}
where \eqref{T700} follows from the fact that $\Delta_{L-N} \geq 0$. Divide both sides of \eqref{T700} by $n$ and let $n \rightarrow \infty$ and $\epsilon \rightarrow 0$. Note that $\delta_k(n,\epsilon) \rightarrow 0$ in the limit as $n \rightarrow \infty$ and $\epsilon \rightarrow 0$ for all $k=1,\ldots,L-N$. We thus have
\begin{equation}
\sum_{l=1}^LR_l \geq \sum_{k=1}^{L-N}\frac{L}{k}H(S_k)
\end{equation}
for any admissible rate tuple $(R_1,\ldots,R_L)$. This completes the proof of Theorem~\ref{thm:S-SMDC1}.

\section{Concluding Remarks} \label{sec:Con}
This paper considered the problem of S-SMDC, which is a natural (perhaps also the simplest) extension of the classical SMDC problem \cite{Roc-Thesis92,Yeu-IT95,RYH-IT97,YZ-IT99} to the secrecy communication setting. First, the problem of encoding individual sources, i.e., the S-SSDC problem, was studied. A precise characterization of the entire admissible rate region was established via a connection to the problem of secure coding over a three-layer WN \cite{CY-ITS} and utilizing some basic polyhedral structure of the admissible rate region. Building on this result, it was then shown that the simple coding strategy of separately encoding individual sources at the encoders (superposition coding) can achieve the minimum sum rate for the general S-SMDC problem.

Based on the result of Theorem~\ref{thm:S-SMDC1} (and the fact that superposition coding can achieve the entire admissible rate region for the classical SMDC problems without secrecy constraints), it is very tempting to conjecture that superposition coding can in fact achieve the entire admissible rate region for the general S-SMDC problem. In Appendix~\ref{app:B}, we verify that this is indeed the case for the simplest nontrivial S-SMDC problem: the $(3,1)$ S-SMDC problem. Our proof relies on an \emph{explicit} characterization of the superposition coding rate region via a Fourier-Motzkin elimination procedure. The optimality of superposition coding is then established by carefully using the results of Proposition~\ref{lemma:ML}.

Extending such a proof strategy to the general $(L,N)$ S-SMDC problem, however, faces a number of challenges. To begin with, the complexity of Fourier-Motzkin elimination procedure grows unboundedly as the total number of encoders $L$ increases. Thus, establishing an explicit characterization of the superposition coding rate region for the general $(L,N)$ S-SMDC problem appears to be very difficult. An alternative strategy is to look for an \emph{implicit} characterization of the superposition coding rate region using \emph{optimal $\alpha$-resolutions}, similar to that \cite{YZ-IT99} for the classical SMDC problem without any secrecy constraints. In fact, note from Theorem~\ref{thm:S-SSDC} that the admissible rate region of an $(L,N,m)$ S-SSDC problem depends on the parameters $N$ and $m$ only via its difference $m-N$. As mentioned previously in Sec.~\ref{sec:SL}, when $N=0$, the $(L,N,m)$ S-SSDC problem reduces to the classical $(L,m)$ SSDC problem without any secrecy constraints. Thus, the admissible rate region of the $(L,N,N+k)$ S-SSDC problem with source $S_k$ is \emph{identical} to that of the classical $(L,k)$ SSDC problem with the same source. As a result,  the superposition coding rate region of the $(L,N)$ S-SMDC problem with sources $(S_1,\ldots,S_{L-N})$ is \emph{identical} to the superposition coding rate region of the classical SMDC problem with a total of $L$ encoders and sources $(S_1,\ldots,S_{L})$ where the entropy rate of the source $H(S_l)=0$ for $l=L-N+1,\ldots,L$. Based on this observation, the $\alpha$-resolution characterization of the superposition coding rate region for the general SMDC problem can be directly translated to the S-SMDC problem. It remains to see whether the properties provided in \cite{YZ-IT99} on optimal $\alpha$-resolutions are sufficient for establishing the optimality of superposition coding for the general S-SMDC problem. This problem is currently under our investigations.

\appendix
\section{Proof of Lemma~\ref{lemma:S-SSDC}}\label{app:A}
Let $D \in \Omega_L^{(m-N)}$ and let $A \in \Omega_{L\setminus D}^{(N)}$. Since $A\cap D =\emptyset$, we have $|D \cup A|=N+(m-N)=m$. For any $(n,(M_1,\ldots,M_L))$ code that satisfies both asymptotically perfect reconstruction constraint \eqref{eq:S-Rec} and the perfect secrecy constraint \eqref{eq:S-Per}, we have by Fano's inequality
\begin{equation}
H(S^n|X_D,X_A)\leq n\delta(n,\epsilon) \label{W1}
\end{equation}
where
\begin{equation}
\delta(n,\epsilon)=1/n+\epsilon\log|\mathcal{S}|
\end{equation}
and
\begin{equation}
H(S^n|X_A)=H(S^n).\label{W2}
\end{equation}
For any admissible rate tuple $(R_1,\ldots,R_L)$ and any $\epsilon>0$, we have
\begin{eqnarray}
n\sum_{l \in D} (R_l+\epsilon) & \geq & \sum_{l \in D} H(X_l)\label{A100}\\
& \geq & H\left(X_D\right)\label{A101}\\
& \geq & H\left(X_D|X_A\right)\label{A102}\\
& \geq & H\left(X_D|X_A\right)+H\left(S^n|X_D,X_A\right)-n\delta(n,\epsilon)\label{A103}\\
& = & H\left(X_D,S^n|X_A\right)-n\delta(n,\epsilon)\\
& = & H\left(S^n|X_A\right)+H\left(X_D|S^n,X_A\right)-n\delta(n,\epsilon)\\
& \geq & H\left(S^n|X_A\right)-n\delta(n,\epsilon)\\
& = & H\left(S^n\right)-n\delta(n,\epsilon)\label{A104}\\
& = & nH(S)-n\delta(n,\epsilon)\label{A105}
\end{eqnarray}
where \eqref{A101} follows from the independence bound on entropy, \eqref{A102} follows from the fact that conditioning reduces entropy, \eqref{A103} follows from \eqref{W1}, \eqref{A104} follows from \eqref{W2}, and \eqref{A105} follows from the fact that the source $S$ is memoryless. Divide both sides of \eqref{A105} by $n$ and let $n \rightarrow \infty$ and $\epsilon \rightarrow 0$. Note that $\delta(n,\epsilon) \rightarrow 0$ in the limit as $n \rightarrow \infty$ and $\epsilon \rightarrow 0$. We have from \eqref{A105} that 
\begin{equation}
\sum_{l \in D} R_l \geq H(S), \quad \forall D \in \Omega_L^{(m-N)}.
\end{equation} 
This completes the proof of Lemma~\ref{lemma:S-SSDC}.

\section{The Admissible Rate Region of the $(3,1)$ S-SMDC Problem}\label{app:B}
In this appendix, we show that superposition coding can achieve the entire admissible rate region for the $(3,1)$ S-SMDC problem (the simplest nontrivial S-SMDC problem). The result is summarized in the following theorem.

\begin{theorem}\label{thm:MS2} 
Superposition coding can achieve the entire admissible rate region for the $(3,1)$ S-SMDC problem, which  is given by the collection of all rate triples $(R_1,R_2,R_3)$ satisfying
\begin{equation}
\begin{array}{rcl}
R_1 & \geq & H(S_1)\\
R_2 & \geq & H(S_1)\\
R_3 & \geq & H(S_1)\\
R_1+R_2 & \geq & 2H(S_1)+H(S_2)\\
R_2+R_3 & \geq & 2H(S_1)+H(S_2)\\
R_3+R_1 & \geq & 2H(S_1)+H(S_2).
\label{eq:MS2}
\end{array}
\end{equation}
\end{theorem}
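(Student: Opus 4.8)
The plan is to establish the chain of inclusions ``superposition-coding region $\subseteq$ admissible region $\subseteq$ the region \eqref{eq:MS2} $\subseteq$ superposition-coding region,'' which forces all three sets to coincide and in particular proves the theorem. The first inclusion is immediate (superposition coding is a legitimate coding scheme, as noted in \eqref{eq:SP}--\eqref{eq:SC}), so the real work is (i) an \emph{exact} characterization of the superposition-coding rate region, which gives the last inclusion, and (ii) a converse showing every admissible rate triple lies in \eqref{eq:MS2}.

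For part (i) I would instantiate superposition coding as in \eqref{eq:SP}--\eqref{eq:SC}, encoding $S_1$ with a $(3,1,2)$ S-SSDC code and $S_2$ with a $(3,1,3)$ S-SSDC code. By Theorem~\ref{thm:S-SSDC} the admissible region of the $(3,1,2)$ problem is $\mathcal{R}(3,1,H(S_1))$, i.e.\ $R_l^{(1)}\geq H(S_1)$ for $l=1,2,3$, and that of the $(3,1,3)$ problem is $\mathcal{R}(3,2,H(S_2))$, i.e.\ $R_l^{(2)}\geq 0$ together with $R_l^{(2)}+R_m^{(2)}\geq H(S_2)$ for every pair $\{l,m\}$. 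The superposition-coding region is the Minkowski sum of these two polyhedra under $R_l=R_l^{(1)}+R_l^{(2)}$, and I would run a Fourier--Motzkin elimination over the six variables $\{R_l^{(1)},R_l^{(2)}\}_{l=1}^{3}$: substitute $R_l^{(1)}=R_l-R_l^{(2)}$ (turning $R_l^{(1)}\geq H(S_1)$ into the upper bound $R_l^{(2)}\leq R_l-H(S_1)$) and then eliminate $R_1^{(2)},R_2^{(2)},R_3^{(2)}$ in turn. Pairing each lower bound of the eliminated variable against an upper bound of the form $R_l^{(2)}\leq R_l-H(S_1)$ reproduces exactly the six inequalities in \eqref{eq:MS2} and no others; conversely, given any triple satisfying \eqref{eq:MS2}, the decomposition $R_l^{(1)}=H(S_1)$, $R_l^{(2)}=R_l-H(S_1)$ is feasible (indeed $R_l^{(2)}+R_m^{(2)}=R_l+R_m-2H(S_1)\geq H(S_2)$), so the whole region \eqref{eq:MS2} is achieved.

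For part (ii), the single-letter bounds $R_l\geq H(S_1)$ are the easy ones: fixing any $j\neq l$, perfect secrecy of $S_1^n$ from the singleton $X_j$ and reconstruction of $S_1^n$ from $X_{\{l,j\}}$ give, via the usual chain-rule/Fano argument (equivalently, Proposition~\ref{lemma:ML} with $k=1$), $H(X_l)\geq H(X_l\mid X_j)\geq nH(S_1)+H(X_l\mid S_1^n,X_j)-n\delta_1(n,\epsilon)\geq nH(S_1)-n\delta_1(n,\epsilon)$, and then letting $n\to\infty$, $\epsilon\to 0$. The crux is the pairwise bound $R_l+R_m\geq 2H(S_1)+H(S_2)$. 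Let $j$ denote the third encoder. Applying the previous step twice — once with the pair $\{l,j\}$ and once with $\{m,j\}$, each time spending the secrecy of the \emph{singleton} $\{j\}$ — yields
\[
H(X_l)+H(X_m)\ \geq\ 2nH(S_1)+H(X_l\mid S_1^n,X_j)+H(X_m\mid S_1^n,X_j)-2n\delta_1(n,\epsilon).
\]
By subadditivity of entropy the two residual conditional entropies dominate $H(X_{\{l,m\}}\mid S_1^n,X_j)$, and Proposition~\ref{lemma:ML} with $k=2$, $D=\{l,m\}$, $A=\{j\}$ gives $H(X_{\{l,m\}}\mid S_1^n,X_j)\geq nH(S_2)+H(X_{\{l,m\}}\mid S_1^n,S_2^n,X_j)-n\delta_2(n,\epsilon)\geq nH(S_2)-n\delta_2(n,\epsilon)$. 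Combining these and using $H(X_l)\leq\log M_l\leq n(R_l+\epsilon)$, then letting $n\to\infty$, $\epsilon\to 0$, gives $R_l+R_m\geq 2H(S_1)+H(S_2)$; the other two pairwise bounds follow by symmetry.

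I expect the pairwise bound to be the main obstacle, and it is worth spelling out why the naive argument falls short: lower-bounding $H(X_l)+H(X_m)\geq H(X_{\{l,m\}})$ and then extracting $H(S_1^n)$ once and $H(S_2^n)$ once from reconstruction of $(S_1^n,S_2^n)$ by the full triple yields only $n\big(H(S_1)+H(S_2)\big)$ — short by $nH(S_1)$. In the classical (non-secure) SMDC problem the missing term is recovered because $S_1^n$ is reconstructible from a \emph{single} encoder, so that $I(X_l;S_1^n)\approx nH(S_1)$; under the secrecy constraint, however, $I(X_l;S_1^n)=0$, and that route is closed. The fix is to pull an $H(S_1)$ out of \emph{each} of $X_l$ and $X_m$ separately, using reconstruction of $S_1^n$ from a \emph{pair} together with secrecy from the complementary singleton, and only then harvest the extra $H(S_2)$ from the leftover conditional entropies via Han-type subadditivity and Proposition~\ref{lemma:ML}. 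The delicate point is the bookkeeping: ensuring that the single available ``unit'' of one-encoder secrecy is spent once per encoder rather than once overall, and that each reconstruction constraint is invoked on a subset of the correct cardinality.
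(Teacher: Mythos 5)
Your proposal is correct and follows essentially the same route as the paper: achievability via superposition of a $(3,1,2)$ and a $(3,1,3)$ S-SSDC code with a Fourier--Motzkin/explicit-decomposition argument, and a converse that applies Proposition~\ref{lemma:ML} with $k=1$ twice to extract $2nH(S_1)$ and then with $k=2$ on $D=\{l,m\}$, $A=\{j\}$ to harvest $nH(S_2)$. The only cosmetic difference is that you condition both singletons on the same third output $X_j$ and merge via subadditivity, whereas the paper conditions cyclically and merges via the chain rule; the two are equivalent.
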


\begin{proof}
\emph{Achievability.} Consider the superposition coding scheme that separately encodes the sources $S_1$ and $S_2$ using the $(3,1,2)$ and $(3,1,3)$ S-SSDC codes, respectively. By Theorem~\ref{thm:S-SSDC}, the admissible rate region for the $(3,1,2)$ S-SSDC problem is given by all rate triples $(R_1^{(1)},R_2^{(1)},R_3^{(1)})$ satisfying
\begin{equation}
\begin{array}{rcl}
R_1^{(1)} & \geq & H(S_1)\\
R_2^{(1)} & \geq & H(S_1)\\
R_3^{(1)} & \geq & H(S_1)
\end{array}
\label{FM-S}
\end{equation}
and the admissible rate region for the $(3,1,3)$ S-SSDC problem is given by all rate triples $(R_1^{(2)},R_2^{(2)},R_3^{(2)})$ satisfying
\begin{equation}
\begin{array}{rcl}
R_1^{(2)} & \geq & 0\\
R_2^{(2)} & \geq & 0\\
R_3^{(2)} & \geq & 0\\
R_1^{(2)}+R_2^{(2)} & \geq & H(S_2)\\
R_2^{(2)}+R_3^{(2)} & \geq & H(S_2)\\
R_3^{(2)}+R_1^{(2)} & \geq & H(S_2).
\end{array}
\end{equation}
Following \eqref{eq:SC}, all rate triples $(R_1,R_2,R_3)$ as given by
\begin{equation}
R_l = R_l^{(1)}+R_l^{(2)}, \quad \forall l=1,2,3 \label{FM-E}
\end{equation}
are admissible via superposition coding. Using Fourier-Motzkin elimination to eliminate $R_l^{(k)}$, $l=1,2,3$ and $k=1,2$, from \eqref{FM-S}--\eqref{FM-E}, we obtain the explicit characterization of the superposition coding rate region for the $(3,1)$ S-SMDC problem as expressed by \eqref{eq:MS2}.

\emph{The converse.} Next, we establish the optimality of superposition coding by proving that every inequality in \eqref{eq:MS2} must hold for \emph{all} admissible rate triples $(R_1,R_2,R_3)$ for the $(3,1)$ S-SMDC problem. Let
\begin{equation} 
a \oplus b := \left\{
\begin{array}{ll}
a+b,& \mbox{if}\; a+b \leq 3\\
a+b-3,& \mbox{otherwise}.
\end{array}
\right.
\end{equation}
For any admissible rate triple $(R_1,R_2,R_3)$, any $l=1,2,3$, and any $\epsilon>0$, we have
\begin{eqnarray}
n(R_l+\epsilon) & \geq & H(X_l) \label{T99}\\
& \geq & H(X_l|X_{l \oplus 1}) \label{T100}\\
& \geq & nH(S_1)+H(X_l|S_1^n,X_{l\oplus 1})-n\delta_1(n,\epsilon)\label{T101}\\
& \geq & nH(S_1)-n\delta_1(n,\epsilon)\label{T102}
\end{eqnarray} 
and
\begin{eqnarray}
&&n(R_l+R_{l \oplus 1}+2\epsilon) \nonumber\\
& & \hspace{20pt} \geq \; H(X_l)+H(X_{l\oplus 1})\label{T199}\\
& & \hspace{20pt} \geq \;  H(X_l|X_{l\oplus 1})+H(X_{l\oplus 1}|X_{l\oplus 2})\label{T200}\\
& & \hspace{20pt} \geq \;  2nH(S_1)+H(X_l|S_1^n,X_{l\oplus 1})+H(X_{l\oplus 1}|S_1^n,X_{l\oplus 2})-2n\delta_1(n,\epsilon)
\label{T201}\\
& & \hspace{20pt} \geq \;  2nH(S_1)+H(X_l|S_1^n,X_{l\oplus 1},X_{l\oplus 2})+H(X_{l\oplus 1}|S_1^n,X_{l\oplus 2})-2n\delta_1(n,\epsilon)
\label{T202}\\
& & \hspace{20pt} = \;  2nH(S_1)+H(X_l,X_{l\oplus 1}|S_1^n,X_{l\oplus 2})-2n\delta_1(n,\epsilon)\\
& & \hspace{20pt} \geq \;  2nH(S_1)+\left(nH(S_2)+H(X_l,X_{l\oplus 1}|S_1^n,S_2^n,X_{l\oplus 2})-n\delta_2(n,\epsilon)\right)-2n\delta_1(n,\epsilon)\label{T203}\\
& & \hspace{20pt} \geq \;  2nH(S_1)+nH(S_2)-n\delta_2(n,\epsilon)-2n\delta_1(n,\epsilon).\label{T204}
\end{eqnarray} 
Here, \eqref{T100}, \eqref{T200} and \eqref{T202} follow from the fact that conditioning reduces entropy, and \eqref{T101}, \eqref{T201} and \eqref{T203} follow from Proposition~\ref{lemma:ML}. Dividing both sides of \eqref{T102} and \eqref{T204} by $n$ and letting $n \rightarrow \infty$ and $\epsilon \rightarrow 0$ complete the proof of the converse part of the theorem.
\end{proof}

\end{document}